\newcommand{\D}{{\mathrm{d}}}
\title{Thermodynamic Tree: The Space of Admissible Paths}
\author{Alexander N. Gorban\thanks{Department of Mathematics,
University of Leicester, UK  ({\tt ag153@le.ac.uk}).}}
\begin{document}

\maketitle

\begin{abstract}
Is a spontaneous transition from a state $x$ to a state $y$ allowed by
thermodynamics? Such a  question arises often in chemical thermodynamics and
kinetics. We ask the more formal question: is there a continuous path between
these states, along which the conservation laws hold, the concentrations remain
non-negative and the relevant thermodynamic potential $G$ (Gibbs energy, for
example) monotonically decreases? The obvious necessary condition, $G(x)\geq
G(y)$, is not sufficient, and we construct the necessary and sufficient
conditions. For example, it is impossible to overstep the equilibrium in
1-dimensional (1D) systems (with $n$ components and $n-1$ conservation laws).
The system cannot come from a state  $x$ to a state $y$ if they are on the
opposite sides of the equilibrium even if $G(x) > G(y)$. We find the general
multidimensional analogue of this 1D rule and constructively solve the problem
of the thermodynamically admissible transitions.

We study dynamical systems, which are given in a positively invariant convex
polyhedron $D$ and have a convex Lyapunov function $G$. An admissible path is a
continuous curve in $D$ along which $G$ does not increase. For $x,y \in D$,
$x\succsim y$ ($x$ precedes $y$) if there exists an admissible path from $x$ to
$y$ and $x\sim y$ if $x\succsim y$ and $y \succsim x$. The tree of $G$ in $D$
is a quotient space $D/\sim $. We provide an algorithm for the construction of
this tree. In this algorithm, the restriction of $G$ onto the 1-skeleton of $D$
(the union of edges) is used. The problem of existence of admissible paths
between states is solved constructively. The regions attainable by the
admissible paths are described.
\end{abstract}

\begin{keywords}
Lyapunov function, convex polyhedron, attainability, tree of
function, entropy, free energy\end{keywords}

\begin{AMS}
37A60,  52A41, 80A30, 90C25
\end{AMS}

\pagestyle{myheadings} \thispagestyle{plain} \markboth{A. N.
GORBAN}{THERMODYNAMIC TREE}

\section{Introduction}

\subsection{Motivation, ideas and a simple example}
``Applied dynamical systems" are models of real systems. The available information about
the real systems is incomplete and uncertainties of various types are encountered  in the
modeling. Often, we view them as errors: errors in the model structure, errors in
coefficients, in the state observation and many others. Nevertheless, there is an order
in this world of errors: some information is more reliable, we trust in some structures
more and even respect them as laws. Some other data are less reliable. There is an
hierarchy of reliability, our knowledge and beliefs (described, for example by R. Peierls
\cite{Peierls1980} for model making in physics). Extracting as many consequences from the
more reliable data either without or before use of the less reliable information is a
task which arises naturally.

In our paper, we study dynamical systems with a strictly convex Lyapunov
function $G$ defined in a positively invariant convex polyhedron $D$. For them,
we analyze the admissible paths, along which $G$ decreases monotonically, and
find the states that are attainable from the given initial state along the
admissible paths. The main area of applications of these systems is chemical
kinetics and thermodynamics. The motivation of our research comes from the
hierarchy of reliability of the information in these applications.

Let us discuss the motivation in more detail. In chemical kinetics, we can rank
the information in the following way. First of all, the list of reagents and
conservation laws should be known. Let the reagents be $A_1,A_2, \ldots , A_n$.
The non-negative real variable $N_i\geq 0$, the amount of $A_i$ in the mixture,
is defined for each reagent, and $N$ is the vector of composition with
coordinates $N_i$. The conservation laws are presented by the linear balance
equations:
\begin{equation}\label{balance}
b_i(N)=\sum_{j=1}^n a_i^j N_j={\rm const} \;\; (i=1, \ldots , m)\, .
\end{equation}
We assume that the linear functions $b_i(N)$ ($i=1, \ldots , m$) are
linearly independent.

The list of the components together with the balance conditions (\ref{balance}) is the
first part of the information about the kinetic model. This determines the space of
states, the polyhedron $D$ defined by the balance equations (\ref{balance}) and the
positivity inequalities $N_i\geq 0$. This is the background of kinetic models and any
further development is less reliable. The polyhedron $D$ is assumed to be bounded. This
means that there exist such coefficients $\lambda_i$ that the linear combination $\sum_i
\lambda_i b_i(N)$ has strictly positive coefficients: $\sum_i \lambda_i a_i^j >0$ for all
$j=1,\ldots, n$.

The thermodynamic functions provide us with the second level of information
about the kinetics. Thermodynamic potentials, such as the entropy, energy and
free energy are known much better than the reaction rates and, at the same
time, they give us some information about the dynamics. For example,  the
entropy increases in isolated systems. The Gibbs free energy decreases in
closed isothermal systems under constant pressure, and  the Helmholtz free
energy decreases under constant volume and temperature. Of course, knowledge of
the Lyapunov functions gives us some inequalities for vector fields of the
systems' velocity but the values of these vector fields remain unknown. If
there are some external fluxes of energy or non-equilibrium substances then the
thermodynamic potentials are not Lyapunov functions and the systems do not
relax to the thermodynamic equilibrium. Nevertheless, the inequality of
positivity of the entropy production persists and this gives us useful
information even about the open systems. Some examples are given in
\cite{obh,GorbKagan2006}.

The next, third part of the information about kinetics is the reaction
mechanism. It is presented in the form of the {\it stoichiometric equations} of
the elementary reactions:
\begin{equation}\label{stoichiometricequation}
\sum_i\alpha_{\rho i}A_i \to \sum_i \beta_{\rho i} A_i \, ,
\end{equation}
where $\rho =1, \ldots, m$ is the reaction number and the {\em stoichiometric
coefficients} $\alpha_{\rho i},\beta_{\rho i}$ ($i=1,\ldots, n$) are
nonnegative integers.

A {\em stoichiometric vector} $\gamma_{\rho}$ of the reaction
(\ref{stoichiometricequation}) is a $n$-dimensional vector with coordinates
$\gamma_{\rho i}=\beta_{\rho i}-\alpha_{\rho i}$, that is, `gain minus loss' in
the $\rho$th elementary reaction.

The concentration of $A_i$ is an intensive variable $c_i=N_i/V$, where $V>0$ is
the volume. The vector $c=N/V$ with coordinates $c_i$ is the vector of
concentrations. A non-negative intensive quantity, $r_{\rho}$, the reaction
rate, corresponds to each reaction (\ref{stoichiometricequation}). The kinetic
equations in the absence of external fluxes are
\begin{equation}\label{KinUrChem}
\frac{\D N}{\D t}=V \sum_{\rho}r_{\rho} \gamma_{\rho}\, .
\end{equation}
If the volume is not constant then equations for concentrations
include $\dot{V}$ and have different form.

For perfect systems and not so fast reactions the reaction rates are functions
of concentrations and temperature given by the {\em mass action law} and  by
the {\em generalized Arrhenius equation}.  A special relation between the
kinetic constants is given by the {\it principle of detailed balance}: For each
value of temperature $T$ there exists a positive equilibrium point where each
reaction (\ref{stoichiometricequation}) is equilibrated with its reverse
reaction. This principle  was introduced for collisions by Boltzmann in 1872
\cite{Boltzmann1872}.  Wegscheider introduced this principle for chemical
kinetics in 1901 \cite{Wegscheider1901}. Einstein in 1916 used it in the
background for his quantum theory of emission and absorption of radiation
\cite{Einstein1916}. Later, it was used by Onsager in his famous work
\cite{Onsager1931}. For a recent review see \cite{GorbanYa2011}.

At the third level of reliability of information, we select the list of components and
the balance conditions, find the thermodynamic potential, guess the reaction mechanism,
accept the principle of detailed balance and believe that we know the kinetic law of
elementary reactions. However, we still do not know the reaction rate constants.

Finally, at the fourth level of available information, we find the
reaction rate constants and can analyze and solve the kinetic
equations (\ref{KinUrChem}) or their extended version with the
inclusion of external fluxes.

Of course, this ranking of the available information is
conventional, to a certain degree. For example, some reaction rate
constants may be known even better than the list of intermediate
reagents. Nevertheless, this hierarchy of the information
availability, list of components -- thermodynamic functions --
reaction mechanism -- reaction rate constants, reflects the real
process of modelling and the stairs of available information about a
reaction kinetic system.

It seems very attractive to study the consequences of the
information of each level  separately. These consequences can be
also organized `stairwise'.  We have the hierarchy of questions: how
to find the consequences for the dynamics (i) from the list of
components, (ii) from this list of components plus the thermodynamic
functions of the mixture, and (iii) from the additional information
about the reaction mechanism.

The answer to the first question is the description of the balance polyhedron
$D$. The balance equations (\ref{balance}) together with the positivity
conditions $N_i\geq 0$ should be supplemented by the description of all the
faces. For each face, some  $N_i= 0$ and we have to specify which $N_i$ have
zero value. The list of the corresponding indices $i$, for which $N_i= 0$ on
the face, $I=\{i_1,\ldots,i_k\}$, fully characterizes the face. This problem of
{\em double description} of the convex polyhedra
\cite{Motzkin1953,Chernikova1965,Fukuda1996} is well known in linear
programming.

The list of vertices \cite{Balinski1961} and edges with the
corresponding indices is necessary for the thermodynamic
analysis. This is the {\em 1-skeleton} of $D$. Algorithms for
the construction of the 1-skeletons of balance polyhedra as
functions of the balance values were described in detail in
1980 \cite{obh}. The related problem of double description for
convex cones is very important for the pathway analysis in
systems biology \cite{Schuster2000,GagneurKlamt2004}.

In this work, we use the 1-skeleton of $D$, but the main focus is on
the second step, i.e. on the consequences of the given thermodynamic
potentials. For closed systems under classical conditions, these
potentials are the Lyapunov functions for the kinetic equations. For
example, for perfect systems we assume the mass action law.  If the
equilibrium concentrations $c^*$ are given, the system is closed and
both  temperature and volume are constant then the function
\begin{equation}
G=\sum_i c_i (\ln({c_i}/{c_i^*})-1)
\end{equation}
is the Lyapunov function; it should not increase in time. The
function $G$ is proportional to the free energy $F=RTG+{\rm const}$
(for detailed information about the Lyapunov functions for kinetic
equations under classical conditions see the textbook \cite{Yab1991}
or the recent paper \cite{Hangos2010}).

If we know the Lyapunov function $G$ then we have the necessary
conditions for the possibility of transition from the vector of
concentrations $c$ to $c'$ during the non-stationary reaction:
$G(c)\geq G(c')$ because the inequality $G(c(t_0))\geq G(c(t_0+t))$
holds for any time $t \geq 0$.

The inequality $G(c)\geq G(c')$ is necessary if we are to reach $c'$ from the initial
state $c$ by a thermodynamically admissible path, but it is not sufficient because in
addition to this inequality there are some other necessary conditions. The simplest and
most famous of them is: if $D$ is one-dimensional (a segment) then the equilibrium $c^*$
divides this segment into two parts and both $c(t_0)$ and $c(t_0+t)$ ($t>0$) are always
on the same side of the equilibrium.

\begin{figure}
\centering
\includegraphics[width=0.6\textwidth]{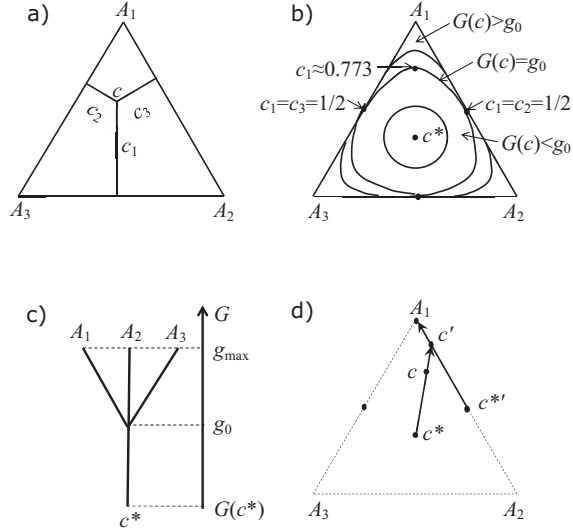}
\caption{The balance simplex (a), the levels of the Lyapunov function (b) and
the thermodynamic tree (c) for the simple system of three components, $A_1,
A_2,A_3$. Algorithm for finding a vertex $v \succsim c$ (d). \label{2Dtree}}
\end{figure}

In 1D systems the overstepping of the equilibrium is forbidden. It is impossible to
overstep a point in dimension one, but it is possible to circumvent a point in higher
dimensions. Nevertheless, in any dimension the inequality $G(c)\geq G(c')$ is not
sufficient if we are to reach $c'$ from the initial state $c$ along an admissible path.
Some additional restrictions remain in the general case as well. A two-dimensional
example is presented in Fig.~\ref{2Dtree}. Let us consider the mixture of three
components, $A_{1,2,3}$ with the only conservation law $c_1+c_2+c_3=b$ (we take for
illustration $b=1$) and the equidistribution in equilibrium $c_1^*=c_2^*=c_3^*=1/3$. The
balance polyhedron is the triangle (Fig.~\ref{2Dtree}a). In Fig.~\ref{2Dtree}b the level
sets of
$$G=\sum_{i=1}^3 c_i(\ln(3c_i) -1)$$
are presented. This function achieves its minimum at equilibrium,
$G(c^*)=-1$. On the edges, the function $G$ achieves its conditional
minimum, $g_0$, in the middles,  and $g_0=\ln(3/2)-1$. $G$ reaches
its maximal value, $g_{\max}=\ln 3 -1$, at the vertices.

If $G(c^*)< g \leq g_0$  then the level set $G(c)=g$ is connected.
If $g_0 < g \leq g_{\max}$ then the corresponding level set $G(c)=g$
consists of three components (Fig.~\ref{2Dtree}b). The critical
value is $g=g_0$. The critical level $G(c)=g_0$ consists of three
arcs. Each arc connects two middles of the edges and divides $D$ in
two sets. One of them is convex and includes two vertices, the other
includes the remaining vertex.

A thermodynamically admissible path is a continuous curve along
which $G$ does not increase. Therefore, such a path cannot intersect
these arcs `from inside', i.e. from values $G(c)\leq g_0$ to bigger
values, $G(c)> g_0$. For example, if an admissible path starts from
the state with 100\% of $A_2$, then it cannot intersect the arc that
separates the vertex with 100\% $A_1$ from two other vertices.
Therefore, any vertex cannot be reached from another one and if we
start from 100\% of $A_2$ then the reaction cannot overcome the
threshold $\sim$77.3\% of $A_1$, that is the maximum of $c_1$ on the
corresponding arc (Fig.~\ref{2Dtree}b). This is an example of the 2D
analogue of the 1D prohibition of overstepping of equilibrium.

For $x,y \in D$, $x\succsim y$ ($x$ precedes $y$) if there exists a thermodynamically
admissible path from $x$ to $y$, and $x\sim y$ if $x\succsim y$ and $y \succsim x$. The
equivalence classes with respect to $x\sim y$ in $D$ are the connected components of the
level sets $G(c)=g$. The quotient space $\mathcal{T}=D/\sim $ is the space of these
connected components. For the canonical projection we use the standard notation $\pi: D
\to \mathcal{T}$. This is the {\em tree of the connected components of the level sets} of
$G$. (Here ``tree" stands for a one dimensional continuum,  a sort of dendrites
\cite{CharatonikEncicl2003}, and not for a tree in the sense of the graph theory.)

If $x\sim y$ then $G(x)=G(y)$.  Therefore, we can define the function $G$ on the tree:
$G(\pi(c))=G(c)$. It is convenient to draw this tree on the plane with the vertical
coordinate $g=G(x)$ (Fig.~\ref{2Dtree}c). The equilibrium $c^*$ corresponds to a root of
this tree, $\pi(c^*)$. If $G(c^*)< g \leq g_0$ then the level set $G(c)=g$ corresponds to
one point on the tree. The level $G(c)= g_0$ corresponds to the branching point, and each
connected component of the level sets $G(c)=g$ with $g_0 < g \leq g_{\max}$ corresponds
to a separate point on the tree. The terminal points (``leaves" with $ g > g_0$) of the
tree correspond to the vertices of $D$.

An {\em ordered segment}  $[x,y]$ or $[y,x]$ ($x\succsim y$)  on the tree $\mathcal{T}$
consists of such points $z$ that $x \succsim z \succsim y$. A continuous curve $\varphi
:[0,1] \to D$ is an admissible path if and only if its image $\pi\circ \varphi :[0,1] \to
\mathcal{T}$ is a  path that goes monotonically down in the coordinate $g$. Such a
monotonic path in $\mathcal{T}$ from a point $x$ to the root is just a segment
$[x,\pi(c^*)]$. On this segment, each point $y$ is unambiguously characterized by
$g=G(y)$. Therefore, if for $c \in D$ we know the value $G(c)$ and a vertex $v \succsim
c$, then we can unambiguously describe the image of $c$ on the tree: $\pi (c)$ is the
point on the segment $[\pi(v),\pi(c^*)]$ with the given value of $G$, $g=G(c)$.

We can find a vertex $v \succsim c$ by a chain of central projections: the first step is
the central projection of $c$ onto the border of $D$ with center $c^*$. The result is the
point $c'$ on a face (in Fig.~\ref{2Dtree}d  this is the point $c'$ on an edge). The
second step is the central projection of the point $c'$ onto the border of the face with
the center at the partial equilibrium ${c^*}'$ (that is, the minimizer of $G$ on the
face) and so on (Fig.~\ref{2Dtree}d). If the projection on a face is the partial
equilibrium then for any vertices $v$ of the face $v \succsim c$. In particular, if the
face is a vertex $v$ then $v \succsim c$. For a simple example presented in
Fig.~\ref{2Dtree}d this is the vertex $A_1$.

In this paper, we extend these ideas and observations to any dynamical system,
which is given in a positively invariant convex polyhedron and has there a
strictly convex Lyapunov function. The class of chemical kinetic equations for
closed systems provides us standard and practically important examples of the
systems of this class.

%In Sec.~\ref{sec:DefNot}, we systematically introduce the basic
%notions, notations and recall some known facts.

\subsection{A bit of history}

It seems attractive to use an attainable region instead of the single
trajectory in situations with incomplete information or with information with
different levels of reliability. Such situations are typical in many areas of
science and engineering. For example, the theory for the continuous--time
Markov chain is presented in \cite{AlbertiCUZ2008,GorbanGorbanJudge2010} and
for the discrete--time Markov chains in \cite{AlbertiUhlmann1982}.

Perhaps, the first celebrated example of this approach was developed in
biological kinetics. In 1936, A.N.~Kolmogorov \cite{Kolmogorov1936} studied the
dynamics of interacting populations of prey ($x$) and predator ($y$) in the
general form:
 $$\dot{x}=xS(x,y), \;\; \dot{y}=yW(x,y)$$
under monotonicity conditions: $\partial S(x,y)/\partial y <0$, $\partial
W(x,y)/\partial y <0$. The zero isoclines, given by equations $S(x,y)=0$ or
$W(x,y)=0$, are graphs of two functions $y(x)$. These isoclines divide the
phase space into compartments with curvilinear borders. The geometry of the
intersection  of the zero isoclines, together with some monotonicity
conditions, contain important information about the system dynamics that we can
find \cite{Kolmogorov1936} without exact knowledge of the kinetic equations.
This approach to population dynamics was applied to various problems
\cite{MayLeonard1975,Bazykin1998}. The impact of this work on population
dynamics was analyzed in the review \cite{Sigmung2007}.

In 1964, Horn proposed to analyze the attainable regions for chemical reactors
\cite{Horn1964}. This approach  became popular in chemical engineering. It was
applied to the optimization of steady flow reactors \cite{Glasser1987}, to
batch reactor optimization  without knowledge of detailed kinetics
\cite{Filippi-Bossy1989}, and for optimization of the reactor structure
\cite{Hildebrandt1990}. An analysis of attainable regions is recognized as a
special geometric approach to reactor optimization \cite{Feinberg1997} and as a
crucially important part of the new paradigm of chemical engineering
\cite{Hill2009}.

Many particular applications were developed, from polymerization
\cite{SmithMalone1997} to particle breakage in a ball mill \cite{Metzger2009}
and hydraulic systems \cite{GorbKagan2006}. Mathematical methods for the study
of attainable regions vary from Pontryagin's maximum principle
\cite{McGregor1999} to linear programming \cite{Kauchali2002}, the Shrink-Wrap
algorithm \cite{Manousiouthakis2004}, and convex analysis. In 1979 it was
demonstrated how to utilize the knowledge about partial equilibria of
elementary processes to construct the attainable regions \cite{Gorban1979}. The
attainable regions significantly depend on the reaction mechanism and it is
possible to use them for the discrimination of mechanisms \cite{GorbanYa1980}.

Thermodynamic data are more robust than the reaction mechanism. Hence, there
are two types of attainable regions. The first is the thermodynamic one, which
use the linear restrictions and the thermodynamic functions
\cite{GorbanChMMS1979}. The second is generated by thermodynamics and
stoichiometric equations of elementary steps (but without reaction rates)
\cite{Gorban1979,GorbanBYa1980}. R. Shinnar and other authors
\cite{Shinnar1985} rediscovered this approach. There was even an open
discussion about priority \cite{Bykov1987}.

Some particular classes of kinetic systems have rich families of the Lyapunov
functions. Krambeck \cite{Krambeck1984} studied attainable regions for linear
systems and the $l_1$ Lyapunov norm instead of the entropy. Already simple
examples demonstrate that the sets of distributions which are accessible from a
given initial distribution by linear kinetic systems (Markov processes) with a
given equilibrium are, in general, {\em non-convex} polytopes
\cite{Gorban1979,GorbanGorbanJudge2010,Zylka1985}. The geometric approach to
attainability was developed for all the thermodynamic potentials and for open
systems as well \cite{obh}. Partial results for chemical kinetics and some
other engineering systems are summarized in \cite{Yab1991,GorbKagan2006}.

The {\em tree of the level set components} for differentiable functions was introduces in
the middle of the 20 century by Adelson-Velskii and Kronrod
\cite{AdelKronrod1945,Kronrod1950} and Reeb \cite{Reeb1946}. Sometimes these trees are
called the {\em Reeb trees} \cite{FomenkoKunii1997} but from the historical point of view
it may be better to call them the Adelson-Velskii -- Kronrod -- Reeb (or AKR) trees.
These trees were essentially used by Kolmogorov and Arnold \cite{Arnold1957} in solution
of the Hilbert's superposition problem (the ideas, their relations to dynamical systems
and role in the Arnold's scientific life are discussed in his lecture \cite{Arnold1999}).

The general {\em Reeb graph} can be defined for any topological space $X$ and real
function $f$ on it. It is the quotient space of $X$ by the equivalence relation ``$\sim$"
defined by $x \sim y$ holds if and only if $f(x) = f(y)$ and $x$, $y$ are in the same
connected component of $f^{-1}(f(x))$. Of course, this ``graph" is again not a discrete
object from the graph theory but a topological space. It has application in differential
topology (Morse theory \cite{Milnor1963}), in topological shape analysis and
visualization \cite{FomenkoKunii1997,Klemela2009}, in data analysis \cite{Stuetzle2003}
and in asymptotic analysis of fluid dynamics \cite{MaslovOmel'yanov2001,Shafarevich2000}.
The books \cite{FomenkoKunii1997,Klemela2009} include many illustration of the Reeb
graphs. The efficient mesh-based methods for the computation of the graphs of level set
components are developed for general scalar fields on 2- and 3-dimensional manifolds
\cite{Doraiswamy2009}.

Some time ago the tree of entropy in the balance polyhedra was rediscovered as an
adequate tool for representation of the attainable regions in chemical thermodynamics
\cite{GorbanChMMS1979,obh}. It was applied to analysis of various real systems
\cite{Kaganovich2010,Zarodnyuk2011}. Nevertheless, some of the mathematical backgrounds
of this approach were delayed in development and publications. Now, the thermodynamically
attainable regions are in extensive use in chemical engineering and beyond
\cite{Feinberg1997,Filippi-Bossy1989,Glasser1987,GorbKagan2006,Hildebrandt1990,Hill2009,Horn1964,Kaganovich2010,Kauchali2002,Krambeck1984,Manousiouthakis2004,McGregor1999,Metzger2009,Shinnar1988,Shinnar1985,SmithMalone1997,Zarodnyuk2011}.
In this paper we aim to provide the complete mathematical background for the analysis of
the thermodynamically attainable regions. For this purpose, we construct the trees of
strictly convex functions in a convex polyhedron. This problem allows a general meshless
solution in higher dimensions because topological and geometrical simplicity (the domain
$D$ is a convex polyhedron and the function $G$ is strictly convex in D). In this paper,
we present this solution in detail.

\subsection{The problem of attainability and its
solution\label{Sec:AttainabilityProblem}}

Let us formulate precisely the problem of attainability and its solution before
the exposition of all technical details and proofs. Our results are applicable
to any dynamical system that obeys a continuous strictly convex Lyapunov
function in a positively invariant convex polyhedron. The situations with
uncertainty, when the specific dynamical system is not given with an
appropriate accuracy but the Lyapunov function is known, give a natural area of
application of these results.

Here and below, $D$ is a convex polyhedron in $\mathbb{R}^n$, $D_0$ consists of
the vertices of $D$,  $D_1$ is the  union of the closed edges of $D$, that is,
the {\em 1-skeleton of $D$}, and $\widetilde{D_1}$ is the graph whose vertices
correspond to the vertices of $D$ and edges correspond to the edges of $D$,
(the {\em graph of the 1-skeleton}) of $D$. We use the same notations for
vertices and edges of $D$ and $\widetilde{D_1}$.

Let a real continuous function $G$ be given in  $D$. We assume that $G$ is {\em strictly
convex} in $D$ \cite{Rockafellar1970}. Let $x^*$ be the minimizer of $G$ in $D$ and let
$g^*=G(x^*)$ be the corresponding minimal value.

The {\em level set} $S_g=\{x \in D \, | \, G(x)=g\}$ is closed and the {\em sublevel set}
$U_g=\{x \in D \, | \, G(x)<g\}$ is open in $D$ (i.e. it is the intersection of an open
subset of $\mathbb{R}^n$ with $D$).

Let us transform $\widetilde{D_1}$ into a labeled graph. Each vertex $v\in D_0$
is labeled by the value $\gamma_v=G(v)$ and each edge $e=[v,w]\subset D_1$  is
labeled by the minimal value of $G$ on the segment $[v,w]\subset D$,
$g_e=\min_{[v,w]}G(x)$. The vertices and edges of $\widetilde{D_1}$ are labeled
by the same numbers as the correspondent vertices and edges of $D_1$. By
definition, the graph $\widetilde{D_1}\setminus U_g$ consists of the vertices
and edges of $\widetilde{D_1}$, whose labels $\gamma\geq g$.

The graph $\widetilde{D_1}\setminus U_{g}$ depends on $g$ but this is a
piecewise constant dependence. It changes only at $g=\gamma$, where $\gamma$
are some of the labels of the graph $\widetilde{D_1}$. Therefore, it is not
necessary to find this graph and to analyze connectivity in it for each value
$G(y)=g$.

\begin{definition}\label{Defin:AdmissiblePath}A continuous path $\varphi [0,1] \to D$ is admissible if the function
$G(\varphi (x))$ does not increase on $[0,1]$. For $x,y \in D$, $x\succsim y$ ($x$
precedes $y$) if there exists an admissible path $\varphi [0,1] \to D$ with $\varphi(0)=
x$ and $\varphi(1)= y$; $x \sim y$ if $x\succsim y$ and $y\succsim x$.
\end{definition}

The relation ``$\succsim $" is transitive. It is a {\em preorder} on $D$. The relation
``$\sim$" is an equivalence.

\begin{definition}\label{Defin:TreeofG}The tree of $G$ in $D$ is the quotient space $\mathcal{T }= D /
\sim$.
\end{definition}

The equivalence classes of $\sim$ are the path-connected components of the level sets
$S_g$. For the natural projection of $D$ on $\mathcal{T}$ we use the notation $\pi: D \to
\mathcal{T}$. We denote by $\pi^{-1}(z)\subset D$ the set of preimages of $z\in
\mathcal{T}$. The preorder ``$\succsim $" on $D$ transforms into a partial order on
$\mathcal{T}$: $\pi(x) \succsim \pi(y)$ if and only if $x\succsim y$. We call
$\mathcal{T}$ also the {\em thermodynamic tree} keeping in mind the thermodynamic
applications. The ``tree" $\mathcal{T}$ is a 1D continuum. We have to distinguish this
continuum from trees in the graph-theoretic sense which have the same graphical
representation but are discrete objects. In Sec.~\ref{Sec:Coordinates} (``Coordinates on
the thermodynamic tree") we describe the tree structure of this continuum. It includes
the root, the edges, the branching points and leaves but the edges are represented as the
real line segments.

\begin{definition}\label{Defin:OderSeg}Let $x, y \in \mathcal{T }$, $x\succsim y$. An ordered segment
$[x,y]$  (or $[y,x]$) consists of such points $z \in \mathcal{T}$ that $x \succsim z
\succsim y$.
\end{definition}

In Sec.~\ref{sec:tree} we prove that  any ordered segment $[x,y]$ ($x\neq y$) in
$\mathcal{T}$ is homeomorphic to $[0,1]$. A continuous curve $\varphi :[0,1] \to D$ is an
admissible path if and only if its image $\pi\circ \varphi :[0,1] \to \mathcal{T}$ is
monotonic in the partial order on $\mathcal{T}$. Such a monotonic path in $\mathcal{T}$
from  $x$ to $y$ ($x\succsim y$) is just a path along a segment $[x,y]$. Each point $z$
on this segment is unambiguously characterized by the value of $G(z)$.

We also use the notation $[x,y]$ for the usual closed segments in $\mathbb{R}^n$ with
ends $x,y$: $[x,y]=\{\lambda x+(1-\lambda)y \, | \, \lambda \in [0,1] \}$. The
degenerated segment $[x,x]$ is just a point $\{x\}$. The segments without one end are
$(x,y]$ and $[x,y)$ and $(x,y)$ is the segment in $\mathbb{R}^n$ without both ends.

{\em The attainability problem}: Let $x,y \in D$ and $G(x) \geq
G(y)$. Is $y$ attainable from $x$ by an admissible path?

{\em The solution of the attainability problem} can be found in
several steps:
\begin{enumerate}
\item{Find two vertices of $D$, $v_x$ and $v_y$, that
    precede $x$ and $y$, correspondingly. Such vertices
    always exist. There may be several such vertices in
    $D$. We can use any of them}.
\item{Construct the graph $\widetilde{D_1}\setminus
    U_{G(y)}$ by deletion from $\widetilde{D_1}$ all the
    elements with the labels $\gamma < G(y)$.}
\item{$y$ is attainable from $x$ by an admissible path if
    and only if $v_x$ and $v_y$ are connected
    in the graph $\widetilde{D_1}\setminus
    U_{G(y)}$.}
\end{enumerate}
So, to check the existence of an admissible path from $x$ to $y$ we should
check the inequality $G(x) \geq G(y)$ (the necessary condition) then go up in
$G$ values and find the vertices, $v_x$ and $v_y$, that precede $x$ and $y$,
correspondingly (such vertices always exist). Then we should go down in $G$
values to $G(y)$ and check whether the vertices $v_x$ and $v_y$ are connected
in the graph $\widetilde{D_1}\setminus  U_{G(y)}$. The classical problem of
determining whether two vertices in a graph are connected may be solved by many
search algorithms \cite{PearlKorf1987,NagamochiIbaraki2008}, for example, by
the elementary breadth--first or depth--first search algorithms.

The procedure ``find a vertex $v_x \in D_0$ that precedes $x\in D$" can be
implemented as follows:
\begin{enumerate}
\item{If $x=x^*$ then any vertex $v\in D_0$ precedes $x$.}
\item{If $x\neq x^*$ then consider the ray  $r_x= \{x^*+\lambda (x-x^*)\, | \,
    \lambda \geq 0 \}$. The intersection $r_x \cap D$ is a closed segment $[x^*,x']$.
    We call $x'$ the central projection of $x$ onto the border of $D$ with the center
    $x^*$; $x'\succsim  x$.}
\item{The central projection $x'$ always belongs to an interior of a face $D'$ of
    $D$, $0\leq \dim D'< \dim D$. If $\dim D'>0$ then set $x:=x'$, $D:=D'$,
    $x^*:={\rm argmin}\{G(z)\,|\, z\in D'\}$ and go to step 1.}
\item{If $\dim D'=0$ then it is a vertex  $v \succsim x$ we are looking for.}
\end{enumerate}
The dimension of the face decreases at each step, hence, after
not more than $\dim D-1$ steps we will definitely obtain the
desired vertex. A simple example is presented in
Fig.~\ref{2Dtree}d.

The information about all connected components of $\widetilde{D_1}\setminus U_{g}$ for
all values of $g$ is summarized in the tree of $G$ in $D$, $\mathcal{T}$
(Definition~\ref{Defin:TreeofG}). The tree $\mathcal{T}$ can be described as follows
(Theorem~\ref{Prop:Coordinate}): it is the space of pairs $(g,M)$, where $g\in[\min_D
G(x),\max_D G(x)]$ and $M$ is a connected component of $\widetilde{D_1}\setminus U_{g}$,
with the partial order relation: $(g,M)\succsim (g',M')$ if $g \geq g'$ and $M \subseteq
M'$. For $x,y\in D$, $x\succsim y$ if and only if $\pi(x)\succsim \pi(y)$.

The tree $\mathcal{T}$ may be constructed gradually, by descending from the
maximal value of $G$, $g=g_{\max}$ (Sec.~\ref{Sec:Algorithm}). At $g=g_{\max}$,
the graph $\widetilde{D_1}\setminus U_{g}$ consists of the isolated vertices
with the labels $\gamma=g_{\max}$ (generically, this is one vertex). Going down
in $g$, we add to $\widetilde{D_1}\setminus U_{g}$ the elements, vertices and
edges, in descending order of their labels. After adding each element we record
the changes in the connected components of $\widetilde{D_1}\setminus U_{g}$.

For each point $z \in \mathcal{T}$, $z=(g,M)$, its preimage in $D$, $\pi^{-1}(g,M)$, may
be described by the equation $G(x)=g$ supplemented by a set of linear inequalities.
Computationally, these linear inequalities can be produced by a {\em convex hull}
operation from a finite set. This finite set is described explicitly in
Sec.~\ref{Sec:AlgAttain}.

For each point $z=(g,M)$ the set of all $z'=(g',M')$ attainable by admissible paths from
$z$ has a simple description, $g' \leq g$, $M' \supseteq M$.

The tree of $G$ in $D$ provides a workbench for the analysis of various questions about
admissible paths. It allows us to reduce the $n$-dimensional problems in $D$ to some
auxiliary questions about such 1D or even discrete objects as the tree $\mathcal{T}$ and
the labeled graph $\widetilde{D_1}$. For example, we use the thermodynamic tree to solve
the following {\it problem of attainable sets}: For a given $x\in D$ describe the set of
all $y \precsim x $ by a system of inequalities. For this purpose, we find the image of
$x$ in $\mathcal{T}$, $\pi(x)$, then define the set of all points attainable by
admissible paths from $\pi(x)$ in $\mathcal{T}$ and, finally, describe the preimage of
this set in $D$ by the system of inequalities (Sec.~\ref{Sec:AlgAttain}).

\subsection{The structure of the paper}
In Sec.~\ref{sec:CutConv}, we present several auxiliary propositions from convex
geometry. We constructively describe the result of the cutting of a convex polyhedron $D$
by a convex set $U$: The description of the connected components of $D\setminus U$ is
reduced to the analysis of the 1D continuum $D_1\setminus U$, where $D_1$ is the
1-skeleton of $D$.

In Sec.~\ref{sec:tree}, we construct the tree of level set components of a
strictly convex function $G$ in the convex polyhedron $D$ and study the
properties of this tree. The main result of this section is the algorithm for
construction of this tree (Sec.~\ref{Sec:Algorithm}). This construction is
applied to the description of the attainable sets in Sec.~\ref{Sec:AlgAttain}.
These sections include some practical recipes and it is possible to read them
independently, immediately after Introduction. Several examples of the
thermodynamic trees for chemical systems are presented in
Sec.~\ref{sec:ChemKin}.

\section{Cutting of a polyhedron $D$ by a convex set $U$\label{sec:CutConv}}

\subsection{Connected components of $D\setminus U$ and of
$D_1 \setminus U$}

Let $D$ be a convex polyhedron in $\mathbb{R}^n$. We use the notations: ${\rm Aff}(D)$ is
the minimal linear manifold that includes $D$; $d=\dim {\rm Aff}(D) =\dim D$ is the
dimension of $D$; $ri( D)$ is the interior of $D$ in  ${\rm Aff}(D)$; $ r\partial( D)$ is
the border of $D$ in ${\rm Aff}(D)$.

For $P,Q \subset \mathbb{R}^n$ the Minkowski sum is $P+Q=\{x+y\, | \, x\in P, y
\in Q\}$. The convex hull (conv) and the conic hull (cone) of a set $V\subset
\mathbb{R}^n$ are:
$${\rm conv} (V) = \left\{\sum_{i=1}^q \lambda_i v_i \, \left|
\, q>0,\, v_1, \ldots , v_q \in V, \, \lambda_1, \ldots \lambda_q > 0, \,
\sum_{i=1}^q \lambda_i =1\right. \right\}\, ;$$
$${\rm cone} (V) =
\left\{\left.\sum_{i=1}^q \lambda_i v_i \, \right|  \, q \geq 0,\, v_1, \ldots
, v_q \in V, \, \lambda_1, \ldots \lambda_q > 0, \,
 \right\}\, .$$
For a set $D\subset \mathbb{R}^n$ the following two statements
are equivalent (the {\em Minkowski--Weyl theorem}):
\begin{enumerate}
\item{For some real (finite) matrix $A$ and real vector
    $b$, $D=\{x\in \mathbb{R}^n \, | Ax \leq b \}$ ;}
\item{There are finite sets of  vectors $\{v_1, \ldots ,
    v_q\} \subset \mathbb{R}^n$ and $\{r_1, \ldots r_p \}
    \subset \mathbb{R}^n$ such that
\begin{equation}\label{WeylMink}
D    ={\rm conv}\{v_1, \ldots \, v_q \} + {\rm cone}\{r_1,
    \ldots , r_p\} \, .
\end{equation}}
\end{enumerate}
Every polyhedron has two representations, of type (1) and (2),
known as (halfspace) $H$-representation and (vertex)
$V$-representation, respectively. We systematically use both
these representations. Most of the polyhedra in our paper are
bounded, therefore, for them only the convex envelope of
vertices is used  in the $V$-representation (\ref{WeylMink}).

The $k$-skeleton of $D$, $D_k$, is the union of the closed
$k$-dimensional faces of $D$:
$$D_0 \subset  D_1 \subset \ldots \subset D_d =D\, .$$
$D_0$ consists of the vertices of $D$ and  $D_1$ is a one-dimensional continuum
embedded in $\mathbb{R}^n$. We use the notation $\widetilde{D_1}$ for the graph
whose vertices correspond to the vertices of $D$ and edges correspond to the
edges of $D$, and call this graph the {\em graph of the 1-skeleton} of $D$.

Let $U$ be a convex subset of $\mathbb{R}^n$ (it may be a non-closed
set). We use $U_0$ for the set of vertices of $D$ that belong to
$U$, $U_0=U\cap D_0$,  and $U_1$ for the set of the edges of $D$
that have non-empty intersection with $U$. By default, we consider
the closed faces of $D$, hence, the intersection of an edge with $U$
either includes some internal points of the edge or consists from
one of its ends. We use the same notation $U_1$ for the set of the
corresponding edges of $\widetilde{D_1}$.

A set $W \subset P \subset \mathbb{R}^n$ is a {\em path-connected component} of $P$  if
it is its maximal path-connected subset. In this section,  we aim to describe the
path-connected components of $D\setminus U$. In particular, we prove that these
components include the same sets of vertices as the connected components of the graph
$\widetilde{D_1}\setminus U$. This graph is produced from $\widetilde{D_1}$ by deletion
of all the vertices that belong to $U_0$ and all the edges that belong to $U_1$.

\begin{lemma}\label{Lem:Vertex5.1}
Let $x \in D\setminus U$. Then there exists such a vertex $v
\in D_0$ that the closed segment $[v,x]$ does not intersect
$U$: $[v,x] \subset D\setminus U$.
\end{lemma}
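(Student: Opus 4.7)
My plan is to prove the lemma by induction on $d=\dim D$, using a weakly separating hyperplane between $x$ and $U$ and then choosing $v$ to be a vertex of $D$ that maximizes the corresponding affine functional. The base case $d=0$ is trivial: $D=\{x\}$, so $[x,x]=\{x\}\subset D\setminus U$.

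For the inductive step, I first dispose of the trivial subcase $U\cap{\rm Aff}(D)=\emptyset$, in which any vertex of $D$ works because $[v,x]\subset{\rm Aff}(D)$. Otherwise $\{x\}$ and $U\cap{\rm Aff}(D)$ are disjoint nonempty convex subsets of ${\rm Aff}(D)$, so the standard weak convex separation theorem supplies an affine functional $\ell$ on ${\rm Aff}(D)$, nonconstant on ${\rm Aff}(D)$, with $\ell(u)\leq\ell(x)$ for every $u\in U\cap{\rm Aff}(D)$. Because $D$ is a bounded polyhedron, $\ell$ attains its maximum $M=\max_{y\in D}\ell(y)$ at a vertex of $D$.

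I then split into two cases according to the value of $M$. If $M>\ell(x)$, I pick any vertex $v$ of $D$ with $\ell(v)=M$; for an arbitrary $y=\lambda v+(1-\lambda)x\in[v,x]$ linearity gives $\ell(y)\geq\ell(x)$, so membership $y\in U$ would force $\ell(y)=\ell(x)$, hence $\lambda=0$ and $y=x$, contradicting $x\notin U$. In the harder case $M=\ell(x)$, the point $x$ lies on the face $F=\{y\in D\,|\,\ell(y)=M\}$, which is proper and of dimension strictly less than $d$ because $\ell$ is nonconstant on ${\rm Aff}(D)$. Applying the inductive hypothesis to the pair $(F,U)$ yields a vertex $v$ of $F$ with $[v,x]\cap U=\emptyset$, and a routine extreme-point check (any convex combination of elements of $D$ representing $v$ must lie entirely in $F$, since $\ell$ is maximized there) shows that $v$ is also a vertex of $D$. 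The main delicate point is that $U$ need not be closed, so $x$ can lie in $\overline{U}$ and only weak separation is available; the case split above converts this into either a strict inequality on the segment (when $M>\ell(x)$) or a recursion to a lower-dimensional face (when $M=\ell(x)$), neatly sidestepping the lack of strict separation.
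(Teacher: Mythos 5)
Your proof is correct, but it follows a genuinely different route from the paper's. You separate $\{x\}$ from $U\cap{\rm Aff}(D)$ by an affine functional $\ell$ (working inside ${\rm Aff}(D)$ so that $\ell$ is a true hyperplane functional there), maximize $\ell$ over the polytope, and either get the segment to a maximizing vertex directly (when $\max_D\ell>\ell(x)$) or recurse into the exposed face $F=\{\ell=\max_D\ell\}$ containing $x$; your extreme-point check that vertices of $F$ are vertices of $D$ and your handling of the fact that only weak/proper separation is available (since $U$ need not be closed and $x$ may lie in $\overline{U}$) are both sound. The paper instead argues in one stroke, with no separation theorem, no induction and no case analysis: assuming every segment $[v,x]$, $v\in D_0$, meets $U$ at some point $x+\lambda_v(v-x)\in U$, it writes $x=\sum_v\kappa_v v$ as a convex combination of vertices and then exhibits explicit weights $\delta_v$ making $x$ a convex combination of the points $x+\lambda_v(v-x)\in U$, so convexity of $U$ forces $x\in U$, a contradiction. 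The paper's argument is shorter and uses only that $D={\rm conv}(D_0)$ and that $U$ is convex (closedness of $U$ never enters, so the delicate point you flag simply does not arise); your argument is heavier machinery but is structurally standard, is somewhat more constructive (find a separating functional, maximize it over the vertices), and yields the extra geometric information that the vertex can be taken on the far side of a hyperplane weakly separating $x$ from $U$, or on a proper face through $x$ after recursion.
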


\begin{proof}Let us assume the contrary: for every vertex $v\in
D_0$ there exists such $\lambda_v \in (0,1]$ that
$x+\lambda_v(v-x) \in U$. The convex polyhedron $D$ is the
convex hull of its vertices. Therefore, $x=\sum_{v \in D_0}
\kappa_v v$ for some numbers $\kappa_v \geq 0$, $v \in D_O$,
$\sum_{v\in D_0} \kappa_v=1$.

Let $$\delta_v=\frac{\kappa_v}{\lambda_v \sum_{v'\in
D_0}\frac{\kappa_{v'}}{\lambda_{v'}}}\, .$$ It is easy to check
that $\sum_{v\in D_0}\delta_v =1$ and
\begin{equation}\label{conv5.1}
x=\sum_{v\in D_0}\delta_v (x+\lambda_v(v-x))\, .
\end{equation}
According to (\ref{conv5.1}), $x$ belongs to the convex hull of
the finite set $\{x+\lambda_v(v-x) \, | \, v\in D_0 \} \subset
U$. $U$ is convex, therefore, $x\in U$ but this contradicts to
the condition $x \notin U$. Therefore, our assumption is wrong
and there exists at least one $v \in  D_0$ such that $[v,x]
\cap U=\emptyset$. \end{proof}

So, if a point from the convex polyhedron $D$ does not belong
to a convex set $U$ then it may be connected to at least one
vertex of $D$ by a segment that does not intersect $U$. Let us
demonstrate now that if two vertices of $D$ may be connected in
$D$ by a continuous path that does not intersect $U$ then these
vertices can be connected in $D_1$ by a path that is a sequence
of edges $D$, which do not intersect $U$.

\begin{lemma}\label{LemSkeleton5.2}Let $v,v' \in D_0$,  $v,v' \notin U$. Suppose that $\varphi :
[0,1] \to (D \setminus U) $ is a continuous path, $\varphi (0)=v$ and $\varphi (1)=v'$.
Then there exists such a sequence of vertices  $\{v_0, \ldots , v_l\} \subset (D
\setminus U)$ that any two successive vertices, $v_i, v_{i+1}$, are connected by an edge
$e_{i,i+1} \subset (D_1\setminus U)$.
\end{lemma}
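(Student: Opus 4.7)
The plan is to combine three ingredients. First, I would apply Lemma~\ref{Lem:Vertex5.1} pointwise along the path to obtain, for each $t\in[0,1]$, a non-empty set $V(t)=\{w\in D_0\setminus U_0 : [w,\varphi(t)]\cap U=\emptyset\}$, noting that $v\in V(0)$ and $v'\in V(1)$. For each $w\in D_0\setminus U_0$ the ``visibility set'' $O_w=\{t\in[0,1] : w\in V(t)\}$ will turn out to be open in $[0,1]$: if $[w,\varphi(t_0)]\cap U=\emptyset$, then since the segment is compact and (in the natural application to sublevel sets of a continuous $G$) $U$ is an open convex set, a uniform tubular neighborhood of the segment still misses $U$, and continuity of $\varphi$ moves the segment only slightly as $t$ varies. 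Hence $\{O_w\}_{w\in D_0\setminus U_0}$ is an open cover of the compact interval $[0,1]$.

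Second, by the Lebesgue-number lemma I choose a partition $0=t_0<t_1<\cdots<t_m=1$ and vertices $w_1,\ldots,w_m$ with $w_1=v$, $w_m=v'$, and $[t_{j-1},t_j]\subset O_{w_j}$. At each breakpoint $t_j$ both $w_j$ and $w_{j+1}$ lie in $V(t_j)$, so $[w_j,\varphi(t_j)]\cup[w_{j+1},\varphi(t_j)]\subset D\setminus U$. It therefore suffices to prove the following sub-lemma: whenever $w,w'\in D_0\setminus U_0$ admit a point $x\in D$ with $[w,x]\cup[w',x]\subset D\setminus U$, the vertices $w,w'$ are joined by an edge-path in $\widetilde{D_1}\setminus U$. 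Concatenating such edge-paths for $j=1,\ldots,m-1$ then produces the required sequence $v_0,\ldots,v_l$.

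Third, I would prove the sub-lemma by induction on $\dim D$. The base case $\dim D=1$ is immediate: $D$ is a single edge with endpoints $w,w'$, and $[w,x]\cup[x,w']=D$ misses $U$, so $[w,w']$ is itself an edge of $\widetilde{D_1}\setminus U$. For $\dim D>1$, let $F$ be the smallest face of $D$ containing $\{w,w',x\}$. If $\dim F<\dim D$, the induction hypothesis applied to the polyhedron $F$ with the polygonal path $[w,x]\cup[x,w']\subset F\setminus U$ yields an edge-path in $F_1\setminus U\subseteq D_1\setminus U$. If instead $\dim F=\dim D$, so that $x\in\mathrm{ri}(D)$, I would first replace $x$ by a perturbation $\tilde{x}$ that strictly lowers the dimension of the smallest containing face: since the convex sets $[w,x]$ and $U$ are disjoint they are separated by a hyperplane, and likewise for $[w',x]$, so a small displacement of $x$ into the open intersection of the two half-spaces opposite to $U$ preserves $[w,\tilde{x}]\cap U=[w',\tilde{x}]\cap U=\emptyset$ while being free to head toward $\partial D$; iterating drops $\dim F$ and reduces to the previous case.

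The main obstacle is precisely this dimensional-reduction step. One must simultaneously control disjointness from $U$ under perturbation (via convexity and a Hahn--Banach separation) and the combinatorial requirement that the perturbed point lands in a strictly lower-dimensional face of $D$ whose closure still contains $w$ and $w'$. This hinges on the open cone of ``safe'' perturbation directions having non-empty intersection with the directions that exit $\mathrm{ri}(D)$ transversally into a proper face, which holds generically but requires a careful case analysis when the separating hyperplanes become tangent to $\partial D$.
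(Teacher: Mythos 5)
Your overall architecture (pointwise visibility via Lemma~\ref{Lem:Vertex5.1}, a compactness argument along the path, and a ``vee'' sub-lemma for two vertices seeing a common point) is genuinely different from the paper's proof, but the proof of the sub-lemma, which carries all the real content, does not work. First, ``$\dim F=\dim D$, so that $x\in ri(D)$'' is false: the smallest face of $D$ containing $\{w,w',x\}$ can be $D$ itself with $x$ on the boundary. More seriously, the whole dimension-reduction strategy is structurally impossible whenever $w$ and $w'$ lie in no common proper face of $D$ (e.g.\ opposite vertices of a cube, a typical situation for balance polyhedra): then the smallest face containing $\{w,w',x\}$ is $D$ for \emph{every} $x\in D$, so no perturbation of $x$, however cleverly separated from $U$, can lower $\dim F$, and the induction never reaches a proper face. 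The obstacle you flag at the end is therefore not a tangency/transversality technicality but a case your scheme cannot reach. What is needed there is an analogue of the paper's interior step: if $U\cap ri(S)\neq\emptyset$, project the path radially from a center $y^0\in U\cap ri(S)$ onto $r\partial(S)$ (convexity of $U$ guarantees the projected path still misses $U$), and if $U\cap ri(S)=\emptyset$, reroute the portion of the path inside $S$ along the boundary of the two-dimensional polygon $S\cap P(y^1,y^S,y^2)$; iterating over faces of decreasing dimension forces the path into $D_1\setminus U$. Your proposal has no counterpart of this radial projection, which is exactly the device that handles two vertices sharing no proper face.

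There is also a gap in the first step: a compact segment disjoint from an \emph{open} set need not be at positive distance from it, so the tubular-neighborhood argument for openness of $O_w$ fails. In the motivating case $U=U_g$ the segment $[w,\varphi(t_0)]$ may graze the level set $S_g$, and an arbitrarily small motion of its endpoint pushes it into $U_g$; for open $U$ the sets $O_w$ are in fact closed rather than open (limits of points outside an open set stay outside), and a chain of vertices with consecutive common visible points can still be extracted from a finite closed cover of the connected interval, while for closed $U$ your distance argument is correct. But the lemma is stated for an arbitrary convex $U$, which the paper explicitly allows to be neither open nor closed, and then neither property of $O_w$ is available. So the reduction step is repairable but misjustified as written, and in any case the sub-lemma it reduces to remains unproven.
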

\begin{proof}Let us, first, prove the statement: {\em the  vertices $v,v'$ belong to
one path-connected component of $D\setminus U$ if and only if they belong to
one path-connected component of $D_1 \setminus U$.}

Let us iteratively transform the path $\varphi$. On the $k$th
iteration we construct a path that connects $v$ and $v'$ in
$D_{d-k}\setminus U$, where $d=\dim D$ and $k=1, \ldots , d-1$.
We start from a transformation of path in a face of $D$.

Let $S \subset D_{j}$ be a closed $j$-dimensional face of $D$, $j\geq 2$ and let $\psi :
[0,1] \to (D_j \setminus U)$ be a continuous path, $\psi (0)=v$, $\psi (1)=v'$ and $\psi
([0,1])\cap U=\emptyset$. We will transform $\psi$ into a continuous path $\psi_S : [0,1]
\to (D_j  \setminus U)$ with the following properties: (i) $\psi_S (0)=v$, $\psi_S
(1)=v'$, (ii) $\psi_S ([0,1])\cap U=\emptyset$, (iii) $\psi_S ([0,1])\setminus S
\subseteq \psi ([0,1])\setminus S$ and (iv) $\psi_S ([0,1])\cap ri(S)=\emptyset$. The
properties (i) and (ii) are the same as for $\psi$, the property (iii) means that all the
points of $\psi_S ([0,1])$ outside  $S$ belong also to $\psi ([0,1])$ (no new points
appear outside $S$) and the property (iv) means that there are no points of $\psi_S
([0,1])$ in $ri(S)$. To construct this $\psi_S$ we consider two cases:
\begin{enumerate}
\item{$U \cap ri(S) \neq \emptyset$, i.e. there exists $y^0
    \in U \cap ri(S)$; }
\item{$U \cap ri(S) = \emptyset$.}
\end{enumerate}
In the first case, let us project any $\psi(\tau) \in ri(S)$ onto $r\partial (S)$ from
the center $y^0$. Let $y \in S$, $y\neq y^0$. There exists such a $\lambda (y)\geq 1$
that $y^0+\lambda (y)(y-y^0)\in r\partial (S)$. This function $\lambda (y)$ is continuous
in $S\setminus \{y^0\}$. The function $\lambda (y)$ can be expressed through the
Minkowski gauge functional \cite{Handbook1993} defined for a set $K$ and a point $x$:
\begin{equation*}
\begin{split}
p_K(x)=\inf\{r>0\, | \, x\in rK\}; \;\; \lambda
(y)=\left(p_{(D-y_0)}(y-y_0)\right)^{-1}\, .
\end{split}
\end{equation*}
Let us define for any $y \in ri(S)$, $y\neq y^0$ a projection $\pi_S
(y)=y^0+\lambda (y)(y-y^0)$. This projection is continuous in $S \setminus
\{y^0\}$, and $\pi_S(y)=y $ if $y \in r
\partial (S)$. It can be extended as a continuous function onto
whole $D_j\setminus \{y^0\}$:
\begin{equation*}
\pi_S (y)=\left\{
\begin{array}{ll}
\pi_S (y) &\mbox{ if } y\in S\setminus \{y^0\}\, ;\\
y & \mbox{ if }  y\in D_j \setminus S\, .
\end{array}\right.
\end{equation*}

The center $y^0 \in U$. Because of the convexity of $U$, if $y \notin U$ then
$y^0+\lambda (y-y^0) \notin U$ for any $\lambda \geq 1$. Therefore, the path
$\psi_S(t)= \pi_S(\psi(t))$ does not intersect $U$ and satisfies all the
requirements (i)-(iv).

Let us consider the second case, $U \cap ri(S) = \emptyset$. There are the
moments of the first entrance of $\psi(t)$ in $S$ and the last going of this
path out of $S$:
$$ \tau_1 =\min\{\tau \, | \, \psi (\tau) \in S\}, \;\; \tau_2 =\max \{\tau \, | \, \psi (\tau) \in
S\} \, ,$$ $0\leq \tau_1 \leq \tau_2 \leq 1$. Let $y^1 = \psi(\tau_1)$ and $y^2
= \psi(\tau_2)$. If $y^1=y^2$ then we can just delete the loop between $\tau_1$
and $\tau_2$ from the path $\psi (\tau)$ and get the path that does not enter
$ri(S)$. So, let $y^1\neq y^2$.

These points belong to $r\partial (S)$. Let $y^S \in ri(S)$ be an arbitrary
point in the relative interior of $S$ which does not belong to the segment
$[y^1,y^2]$ ($\dim S\geq 2$). The segments $[y^1,y^S]$ and $[y^2,y^S]$ do not
intersect $U$ because the following reasons: $U \cap S \subset r\partial (S)$
(may be empty), neither $y^1$ nor $y^2$ belong to $U$, and all other points of
the 3-vertex polygonal chain $[y^1,y^S,y^2]$ belong to $ri(S)$.`

Let $P(y^1,y^S,y^2)$ be a plane that includes the chain $[y^1,y^S,y^2]$. The
intersection $S \cap P(y^1,y^S,y^2)$ is a convex polygon. The convex set  $U
\cap S \cap P(y^1,y^S,y^2)$ belongs to the border of this polygon. Therefore,
it belongs to one side of it (Fig.~\ref{PlanePolygon})  (may be empty) because
convexity of the polygon and of the set $U$. The couple of points $y^1,y^2$ cut
the border of the polygon in two connected broken lines. At least one of them
does not intersect $U$ (Fig.~\ref{PlanePolygon}). Let us substitute $\psi$ on
the interval $[\tau_1,\tau_2]$ by this broken line. The new path does not
intersect $ri(S)$. Let us use for this new path the notation $\psi_S(t)$. The
path $\psi_S$ does not intersect $ri(S)$ and $U$, and all the points on them
outside $S$ are the points on the path $\psi$ for the same values of the
argument $\tau$.

\begin{figure}
\centering
\includegraphics[width=0.3\textwidth]{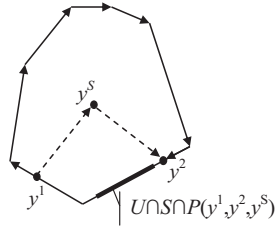}
\caption{Intersection of a face $S$ with the plane $P(y^1,y^S,y^2)$ when $U
\cap ri(S) = \emptyset$ (Lemma~\ref{LemSkeleton5.2}, case 2). In this
intersection, $U \cap S \subset r\partial (S)$ belongs to one side of the
polygon (the bold segment).  The dashed lines with arrows represent the
3-vertex polygonal chain $[y^1,y^S,y^2]$. There exists a path from $y^1$ to
$y^2$ along the boundary of the polygon. In Fig., this is the polygonal chain
that follows the solid lines with arrows. \label{PlanePolygon}}
\end{figure}

So, for any closed face  $S\subset D$ with $\dim S =j \geq 2$ and a continuous
path $\psi : [0,1] \to (D_j \setminus U)$ that connects the vertices $v,v'$ of
$D$ ($\psi (0)=v$, $\psi (1)=v'$) we construct a continuous path $\psi_S :
[0,1] \to (D_j  \setminus U)$ that connects the same vertices, does not
intersect $ri(S)$ and takes no new values outside $S$, $\psi_S ([0,1])\setminus
S \subseteq \psi ([0,1])\setminus S$.

Let us order the faces $S\subseteq D$ with $\dim S \geq 2$ in such a way that $\dim S_i
\geq \dim S_j$ for $i<j$: $D=S_0, S_1, \ldots , S_{\ell}$. Let us start from a given path
$\varphi : [0,1] \to D\setminus U$ that connects the vertices $v$ and $v'$ and let us
apply sequentially the described procedure:
$$\theta=(\ldots(((\varphi_{S_0})_{S_1})_{S_2})\ldots )_{S_{\ell}}\, .$$
By the construction, this path $\theta$ does not intersect any relative
interior $ri(S_k)$ ($k=0,1,\ldots , \ell$). Therefore, the image of $\theta$
belongs to $D_1$, $\theta : [0,1] \to (D_1 \setminus U) $. It can be
transformed into a simple path in $D_1\setminus U$ by deletion of all loops (if
they exist). This simple path (without self-intersections) is just the sequence
of edges we are looking for.
\end{proof}

Lemmas~\ref{Lem:Vertex5.1},~\ref{LemSkeleton5.2} allow us to
describe the connected components of the $d$-di\-men\-sio\-nal
set $D\setminus U$ through the connected components of the
one-dimensional continuum $D_1 \setminus U$.

\begin{proposition}\label{PropComponents5.1}Let
$W_1, \ldots, W_q$ be all the path-connected components of $D\setminus U$. Then $W_i\cap
D_0 \neq \emptyset$ for all $i=1, \ldots ,q$, the continuum $D_1\setminus U$ has $q$
path-connected components and  $W_i \cap D_1$ are these components.
\end{proposition}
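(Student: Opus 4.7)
The plan is to establish two structural facts about the components $W_i$ of $D\setminus U$, namely that each meets $D_0$ and that each restriction $W_i\cap D_1$ is a path-connected component of $D_1\setminus U$; disjointness of the $W_i$ together with $D_1\subseteq D$ then forces a bijection between the components of the two sets, and in particular $D_1\setminus U$ has exactly $q$ components.

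For the first step, I would pick any $x\in W_i$ and apply Lemma~\ref{Lem:Vertex5.1} to get a vertex $v\in D_0$ with $[v,x]\subset D\setminus U$. Since the segment is path-connected and contains $x$, it lies entirely in $W_i$, so $v\in W_i\cap D_0$.

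For the second step, the crucial geometric input is that because $U$ is convex, its intersection with any edge $e$ of $D$ is either empty or a single subinterval of $e$, so $e\setminus U$ has at most two connected components and each non-empty component contains one endpoint of $e$. Given $x\in W_i\cap D_1$ lying on an edge $e$, I can therefore walk within $e\setminus U\subseteq D_1\setminus U$ from $x$ to the endpoint $v_x$ on the same side of $e\cap U$ as $x$; this yields a path in $D_1\setminus U$ from $x$ to a vertex $v_x\in W_i\cap D_0$. Given a second point $y\in W_i\cap D_1$ with companion vertex $v_y\in W_i\cap D_0$ produced the same way, Lemma~\ref{LemSkeleton5.2} supplies a sequence of edges in $D_1\setminus U$ joining $v_x$ to $v_y$, because both vertices lie in the single path-connected component $W_i$ of $D\setminus U$. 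Concatenating the three pieces gives a path from $x$ to $y$ in $D_1\setminus U$, so $W_i\cap D_1$ is path-connected there.

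Maximality is immediate: any point of $D_1\setminus U$ joined to $W_i\cap D_1$ by a path in $D_1\setminus U\subseteq D\setminus U$ must already lie in the component $W_i$. Since the $W_i$ partition $D\setminus U$ and every point of $D_1\setminus U$ sits in some $W_i$, the sets $W_i\cap D_1$ partition $D_1\setminus U$ into exactly $q$ path-connected components. The step I expect to need most care is the initial ``walk to an endpoint'' move: it is the convexity of $U$ that forbids a component of $D_1\setminus U$ from being an arc stranded in the interior of a single edge with no vertex attached, and without that observation one could not reduce the problem to the vertex-level statement of Lemma~\ref{LemSkeleton5.2}.
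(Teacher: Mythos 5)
Your proof is correct and follows essentially the same route as the paper: Lemma~\ref{Lem:Vertex5.1} gives a vertex in each $W_i$, Lemma~\ref{LemSkeleton5.2} connects such vertices inside $D_1\setminus U$, and the converse direction is immediate since $D_1\subset D$. The only difference is that you spell out the ``walk along an edge to an endpoint'' step (convexity of $U$ makes $U\cap e$ an interval, so each component of $e\setminus U$ reaches a vertex), a detail the paper's proof leaves implicit, and making it explicit is a sound refinement rather than a deviation.
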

\begin{proof}
Due to Lemma~\ref{Lem:Vertex5.1}, each path-connected component of $D\setminus U$
includes at least one vertex of $D$. According to Lemma~\ref{LemSkeleton5.2}, if two
vertices of $D$ belong to one path-connected component of $D\setminus U$ then they belong
to one path-connected component of $D_1 \setminus U$. The reverse statement is obvious,
because $D_1 \subset D$ and a continuous path in $D_1$ is a continuous path in $D$.
\end{proof}

We can study connected components of a simpler, discrete object, the graph
$\widetilde{D_1}$. The path-connected components of $D\setminus U$ correspond to the
connected components of the graph $\widetilde{D_1}\setminus U$. (This graph is produced
from $\widetilde{D_1}$ by deletion all the vertices that belong to $U_0$ and all the
edges that belong to $U_1$).

\begin{proposition}\label{PropComponents5.2}
Let  $W_1, \ldots, W_q$ be all the path-connected components of $D\setminus U$. Then the
graph $\widetilde{D_1}\setminus U$ has exactly $q$ connected components and each set $W_i
\cap D_0$ is the set of the vertices of $D$ of one connected component of
$\widetilde{D_1}\setminus U$.
\end{proposition}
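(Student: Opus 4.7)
The plan is to combine Proposition~\ref{PropComponents5.1} with a direct translation between the topological 1-skeleton $D_1 \setminus U$ and the combinatorial graph $\widetilde{D_1} \setminus U$. Proposition~\ref{PropComponents5.1} already gives that $D_1 \setminus U$ has exactly $q$ path-connected components, namely the sets $W_i \cap D_1$, each of which contains at least one vertex of $D$. Thus it is enough to prove that two vertices $v,v' \in D_0 \setminus U$ lie in the same path-connected component of $D_1 \setminus U$ if and only if they lie in the same connected component of $\widetilde{D_1} \setminus U$.

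The ``if'' direction is essentially bookkeeping. If $v = v_0, v_1, \ldots, v_l = v'$ is a walk in $\widetilde{D_1} \setminus U$, then none of the $v_i$ belong to $U_0$ and none of the edges $[v_i,v_{i+1}]$ belong to $U_1$, so by the very definition of $U_1$ each such closed edge is disjoint from $U$. Concatenating these segments gives a continuous path in $D_1 \setminus U$ from $v$ to $v'$.

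The ``only if'' direction is exactly the content of Lemma~\ref{LemSkeleton5.2}: a continuous path in $D_1 \setminus U \subseteq D \setminus U$ joining $v,v'$ can be replaced by a finite sequence of vertices $v_0,\ldots,v_l \in D_0 \setminus U$ in which consecutive vertices are joined by edges entirely contained in $D_1 \setminus U$. Such edges are, by definition, not in $U_1$, and the intermediate vertices are not in $U_0$, so the sequence is a walk in $\widetilde{D_1} \setminus U$.

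Combining both directions with Proposition~\ref{PropComponents5.1}, the partition of $D_0 \setminus U$ into connected components of $\widetilde{D_1} \setminus U$ coincides with the partition $\{W_1 \cap D_0,\ldots,W_q \cap D_0\}$; each of these sets is nonempty by Lemma~\ref{Lem:Vertex5.1}. Therefore $\widetilde{D_1} \setminus U$ has exactly $q$ connected components, and their vertex sets are precisely the $W_i \cap D_0$. The only place where anything substantive could go wrong is the reduction from a general continuous path in $D \setminus U$ to a sequence of skeleton edges disjoint from $U$, but this geometric work has already been done in Lemmas~\ref{Lem:Vertex5.1} and~\ref{LemSkeleton5.2}, so the remaining argument is purely combinatorial.
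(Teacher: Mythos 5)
Your proof is correct and follows essentially the same route as the paper: reduce to Proposition~\ref{PropComponents5.1} and then translate between path-connected components of the continuum $D_1\setminus U$ and connected components of the graph $\widetilde{D_1}\setminus U$. The only cosmetic difference is that for the skeleton-to-graph direction you invoke Lemma~\ref{LemSkeleton5.2} wholesale, whereas the paper just notes that deleting loops from a path in $D_1\setminus U$ between two vertices yields a sequence of full edges avoiding $U$; both justifications are valid.
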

\begin{proof}Indeed, every path between vertices in $D_1$
includes a path that connects these vertices and is the sequence of edges. (To
prove this statement we just have to delete all loops in a given path.)
Therefore, the vertices $v_1,v_2$ belong to one connected component of
$\widetilde{D_1}\setminus U$ if and only if they belong to one path-connected
component of $D_1\setminus U$. The rest of the proof follows from
Proposition~\ref{PropComponents5.1}.
\end{proof}

We proved that the path-connected components of $D\setminus U$ are in
one-to-one correspondence with the components of the graph
$\widetilde{D_1}\setminus U$ (the correspondent components have the same sets
of vertices). In applications, we will meet the following problem. Let a point
$x \in D \setminus U$ be given. Find the path-connected component of
$D\setminus U$ which includes this point. There are two basic ways to find this
component. Assume that we know the connected components of
$\widetilde{D_1}\setminus U$. First, we can examine the segments $[x,v]$ for
all vertices $v$ of $D$. At least one of them does not intersect $U$
(Lemma~\ref{Lem:Vertex5.1}). Let it be $[x,v_0]$. We can find the connected
component $\widetilde{D_1}\setminus U$ that contains $v_0$. The point $x$
belongs to the correspondent path-connected component of $D \setminus U$. This
approach exploits the $V$-description of the polyhedron $D$.  The work
necessary for this method is proportional to the number of vertices of $D$.

Another method is based on projection on the faces of $D$. Let $x \in ri(D)$.
We can take any point $y^0 \in D\setminus U$ and find the unique $\lambda_1>1$
such that $x^1=y_0+\lambda_1 (x-y^0) \in r\partial (D)$. Let $x^1 \in ri(S_1)$,
where $S_1$ is a face of $D$. If $S_1\cap U =\emptyset$ then we can take any
vertex $v_0 \in S_1$ and find the connected component $\widetilde{D_1}\setminus
U$ that contains $v_0$. This component gives us the answer. If $S_1 \cap U
\neq\emptyset$ then we can take any $y^1 \in S_1\cap U$ and find the unique
$\lambda_2>1$ such that $x^2=y^1+\lambda_2 (x^1-y^1) \in r\partial (S)$. This
$x^2$ belongs to the relative boundary of the face $S_1$. If $x^2$ is not a
vertex then it belongs to the relative interior of some face $S_2$, $\dim S_2
>0$ and we have to continue. At each iteration, the dimension of faces decreases. After $d=\dim D$
iterations at most we will get the vertex $v$ we are looking for (see also
Fig.~\ref{2Dtree}) and find the connected component of
$\widetilde{D_1}\setminus U$ which gives us the answer. Here we exploit the
$H$-description of $D$.

\subsection{Description of the connected components of $D\setminus U$ by inequalities\label{Sec:DescrIneq}}

Let $W_1, \ldots , W_q$ be the path-connected components of $D \setminus U$.

\begin{proposition}\label{Prop5.3Conv}For any set of indices $I
\subset \{1, \ldots , q\}$ the set
$$K_I= U \bigcup \left(\bigcup_{i \in I} W_i\right)$$
is convex.
\end{proposition}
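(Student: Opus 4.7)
The plan is to verify convexity directly: pick arbitrary $x,y \in K_I$ and show that the closed segment $[x,y]$ lies entirely in $K_I$. Since the only points of $D$ that can fail to be in $K_I$ are those in components $W_k$ with $k \notin I$, the goal becomes: $[x,y]$ never enters such a ``forbidden'' component.

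The main tool is the convexity of $U$: the intersection $[x,y] \cap U$ is a convex subset of the one-dimensional segment, hence a (possibly empty, possibly degenerate) sub-interval $J \subseteq [x,y]$. Its complement $[x,y]\setminus J$ therefore consists of at most two connected pieces, each of which contains one of the endpoints $x$ or $y$ (whichever is not in $U$) and lies entirely in $D\setminus U$. Since a connected subset of $D\setminus U$ must sit inside a single path-connected component $W_i$, each such piece is pinned to the component of $x$ or of $y$.

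From here the argument splits into the obvious cases. If both endpoints lie in $U$, then $[x,y]\subseteq U \subseteq K_I$ by convexity of $U$, and we are done. If exactly one endpoint, say $y$, lies outside $U$, then $y \in W_i$ for some $i \in I$ and the unique outside piece of $[x,y]$ is a connected subset of $D\setminus U$ containing $y$, hence contained in $W_i$; thus $[x,y]\subseteq U \cup W_i \subseteq K_I$. If both endpoints lie outside $U$, then $x \in W_i$ and $y \in W_j$ with $i,j \in I$, and the outside pieces are connected subsets of $D\setminus U$ containing $x$ and $y$ respectively, hence contained in $W_i$ and $W_j$. The one remaining sub-case is when $[x,y] \cap U = \emptyset$: then $[x,y]$ itself is a continuous path in $D \setminus U$ joining $x$ and $y$, forcing $i = j$, so $[x,y] \subseteq W_i \subseteq K_I$.

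I do not anticipate a real obstacle; the proof is essentially a bookkeeping argument that relies only on two facts already in hand, namely the convexity of $U$ (which restricts $[x,y]\cap U$ to be an interval) and the definition of $W_i$ as a maximal path-connected subset of $D\setminus U$ (which forces each outside piece to be confined to a single $W_i$). The only spot requiring a moment of care is ensuring that when $[x,y]$ misses $U$ entirely the two endpoints must share the same component, which follows because $[x,y]$ itself is a path in $D\setminus U$.
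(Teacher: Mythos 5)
Your proof is correct and follows essentially the same route as the paper: both arguments rest on the convexity of $U$ forcing $[x,y]\cap U$ to be a sub-interval, and on the maximality of the components $W_i$ pinning each remaining piece of the segment (an interval, hence path-connected) to the component of its endpoint; the paper merely spells this out as five explicit cases where you group them by how many endpoints lie in $U$.
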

\begin{proof}Let $y^1, y^2 \in K_I$. We have to prove that
$[y^1, y^2] \subset K_I$. Five different situations are
possible:
\begin{enumerate}
\item{$y^1,y^2 \in U$;}
\item{$y^1 \in U, \, y^2 \in W_i$, $i\in I$;}
\item{$y^1, y^2 \in W_i$, $i\in I$, $[y^1, y^2]\cap U
    =\emptyset$;}
\item{$y^1, y^2 \in W_i$, $i\in I$, $[y^1, y^2]\cap U \neq
    \emptyset$;}
\item{$y^1 \in W_i, \, y^2 \in W_j$, $i,j\in I$, $i\neq
    j$.}
\end{enumerate}
We will systematically use two simple facts: (i) the convexity of $U$ implies that its
intersection with any segment is a segment and (ii) if $x^1\in W_i$ and $x^2 \in
D\setminus W_i$ then the segment $[x^1,x^2]$ intersects $U$ because $W_i$ is a
path-connected component of $U$.

In case 1, $[y^1,y^2] \subset U \subset K$ because convexity
$U$.

In case 2, there exists such a point $y^3\in (y^1,y^2)$ that
$[y^1,y^3) \subseteq U \cap [y^1,y^2] \subseteq [y^1,y^3]$. The
segment $(y^3,y^2]$ cannot include any point $x \in D\setminus
W_i$ because it does not include any point from $U$. Therefore,
in this case $(y^3,y^2]\subset W_i \subset K$ and $y^3 \in K$
because it belongs either to $U$ or to $W_i$.

In case 3, $[y^1,y^2] \subset W_i \subset K$ because $W_i$ is a path-connected component
of $D\setminus U$ and $[y^1,y^2] \cap U =\emptyset$.

In case 4, $[y^1,y^2]\cap U$ is a segment $L$ with the ends $x^1,x^2$. It may
be $[x^1,x^2]$ ($y^1<x^1\leq x^2<y^2$), $(x^1,x^2]$ ($y^1\leq x^1<x^2<y^2$),
$[x^1,x^2)$ ($y^1<x^1<x^2\leq y^2$), or $(x^1,x^2)$ ($y^1\leq x^1<x^2\leq
y^2$). This segment cuts $[y^1,y^2]$ in three segments: $[y^1,y^2]=L_1\cup L
\cup L_2$, $L_1$ includes $y^1$ and $L_2$ includes $y^2$. Therefore, $L_1
\subset W_i$, $L \subset U$ and $L_2 \subset W_i$ because $W_{i}$ is a
path-connected component of $D \setminus U$ and $U$ is convex. So,
$[y^1,y^2]\subset K$.

In case 5, $[y^1,y^2]\cap U$ is also a segment $L$ with the ends $x^1,x^2$. It
may be $[x^1,x^2]$ ($y^1<x^1\leq x^2<y^2$), $(x^1,x^2]$ ($y^1\leq
x^1<x^2<y^2$), $[x^1,x^2)$ ($y^1<x^1<x^2\leq y^2$), or $(x^1,x^2)$ ($y^1\leq
x^1<x^2\leq y^2$). This segment cuts $[y^1,y^2]$ in three segments:
$[y^1,y^2]=L_1\cup L \cup L_2$, $L_1$ includes $y^1$ and $L_2$ includes $y^2$.
Therefore, $L_1 \subset W_i$, $L \subset U$ and $L_2 \subset W_j$ because
$W_{i,j}$ are path-connected components of $D \setminus U$ and $U$ is convex.
So, $[y^1,y^2]\subset K$.
\end{proof}

Typically, the set $U$ is represented by a set of inequalities, for example, $G(x) \leq
g$. It may be useful to represent the path-connected components of $D \setminus U$ by
inequalities. For this purpose, let us first construct a convex polyhedron $Q \subset U$
with the same number of path-connected components in $D \setminus Q$, $V_1, \ldots , V_q$
and with inclusons $W_i \subset V_i$. We will construct $Q$ as a convex hull of a finite
set. Let us select the edges $e$ of $D$ which intersect $U$ but the intersection $e\cap
U$ does not include vertices of $D$. For every such edge we select one point $x_e \in
e\cap U$. The set of these points is $Q_1$. By definition,
\begin{equation}\label{Qdef}
Q= {\rm conv}(U_0 \cup Q_1)\, .
\end{equation}
$Q$ is convex, hence, we can apply all the previous results about the
components of $D \setminus U$ to the components of $D \setminus Q$.

\begin{lemma}\label{Lemma5.3D}
The set $U_0 \cup Q_1$ is the set of vertices of $Q$.
\end{lemma}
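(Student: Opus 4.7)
The plan is to show the two containments: every vertex of $Q$ must lie in $U_0 \cup Q_1$ (by the general fact that vertices of a convex hull of a finite set $V$ are a subset of $V$), and every element of $U_0 \cup Q_1$ is an extreme point of $Q$. The nontrivial direction is the second.

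For a point $v \in U_0$, I would observe that $Q \subseteq D$ (since $U_0 \cup Q_1 \subset D$ and $D$ is convex), so $v \in Q$. Since $v$ is a vertex of $D$, it is an extreme point of $D$, hence it is also an extreme point of every convex subset that contains it, including $Q$.

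For a point $x_e \in Q_1$, chosen from the relative interior of an edge $e = [v_1, v_2]$ of $D$, I would suppose $x_e = \lambda y + (1-\lambda) z$ with $y,z \in Q$ and $\lambda \in (0,1)$, and deduce $y = z = x_e$. First, since $y, z \in Q \subseteq D$ and $x_e$ lies in the relative interior of the face $e$ of $D$, the standard face property of convex sets forces $y, z \in e$. Second, I would show that $e \cap Q = \{x_e\}$. Indeed, any $w \in Q$ is a convex combination $w = \sum_i \mu_i p_i$ with $p_i \in U_0 \cup Q_1$ and $\mu_i > 0$; if $w \in e$, applying the face property of $e$ again forces every $p_i$ with $\mu_i > 0$ to belong to $e$. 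But by hypothesis the construction of $Q_1$ selects only the single point $x_e$ from $e$, and the endpoints $v_1, v_2$ of $e$ are not in $U$ (since $e \cap U$ contains no vertex of $D$), so $v_1, v_2 \notin U_0$. Therefore $(U_0 \cup Q_1) \cap e = \{x_e\}$, forcing $w = x_e$, and in particular $y = z = x_e$.

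The main obstacle is nothing more than a careful bookkeeping application of the face property to pass from ``$x_e$ is a convex combination of points of $Q$'' to ``those points lie on $e$'', together with using the precise defining hypothesis on $Q_1$ that the vertices of each selected edge do not belong to $U$. Once these are in place, the two inclusions combine to give $\operatorname{vert}(Q) = U_0 \cup Q_1$.
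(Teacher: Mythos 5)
Your proof is correct and follows essentially the same route as the paper: points of $U_0$ are extreme because they are vertices of $D\supseteq Q$, and each $x_e\in Q_1$ is extreme because it lies in the relative interior of the edge $e$, the face property confines any representing points to $e$, and $(U_0\cup Q_1)\cap e=\{x_e\}$ by the construction of $Q_1$ (the endpoints of a selected edge are not in $U$). The only cosmetic difference is that you argue via extreme points and $e\cap Q=\{x_e\}$, whereas the paper phrases it as the impossibility of writing a generator as a convex combination of the other generators; the substance is identical.
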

\begin{proof}
A point $x \in U_0 \cup Q_1$ is not a vertex of $Q={\rm conv}(U_0 \cup Q_1)$ if and only
if it is a convex combination of other points from this set: there exist such $x_1,
\ldots, x_k \in U_0 \cup Q_1$ and $\lambda_1, \ldots , \lambda_k >0$ that $x_i \neq x$
for all $i=1, \ldots , k$ and
$$\sum_{i=1}^k \lambda_i =1\, , \;\; \sum_{i=1}^k \lambda_i x_i
=x\, .$$ If $x\in U_0$ then this is impossible because $x$ is a vertex of $D$ and $U_0
\cup Q_1 \subset D$. If $x \in Q_1$ then it belongs to the relative interior of an edge
of $D$ and, hence, may be a convex combination of points $D$ from this edge only. By
construction, $U_0 \cup Q_1$ may include only one internal point from an edge and in this
case does not include a vertex from this edge. Therefore, all the points from $Q_1$ are
vertices of $Q$.
\end{proof}

\begin{lemma}\label{Lemma5.3D-2}The set
$D\setminus Q$ has $q$ path-connected components $V_1, \ldots , V_q$ that may be
enumerated in such a way that $W_i \subset V_i$ and $W_i = V_i \setminus U$.
\end{lemma}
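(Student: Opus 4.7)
The plan is to apply Proposition~\ref{PropComponents5.2} to the convex polyhedron $Q$ and reduce counting the path-connected components of $D\setminus Q$ to analyzing the graph $\widetilde{D_1}\setminus Q$. The key step will be to show that this graph \emph{coincides} with $\widetilde{D_1}\setminus U$, so that the two have the same $q$ connected components and hence $D\setminus Q$ has $q$ path-connected components, which can then be matched to the $W_i$.

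First I would check that $Q\subseteq U$: since $U_0\cup Q_1\subset U$ and $U$ is convex, $Q={\rm conv}(U_0\cup Q_1)\subseteq U$. This immediately gives $D\setminus U\subseteq D\setminus Q$ and $U_0\subseteq Q\cap D_0$; combined with $Q\subseteq U$, also $Q\cap D_0\subseteq U_0$, so the sets of vertices deleted from $\widetilde{D_1}$ in forming $\widetilde{D_1}\setminus Q$ and $\widetilde{D_1}\setminus U$ agree. For the edges, I need to verify that an edge $e$ of $D$ meets $Q$ iff it meets $U$. The forward direction is immediate from $Q\subseteq U$. For the reverse I would split into cases by how $e$ intersects $U_0$: (a) if both endpoints of $e$ lie in $U_0$, then $e\subseteq Q$ since both endpoints are vertices of $Q$; (b) if exactly one endpoint of $e$ lies in $U_0$, that endpoint already belongs to $Q$; (c) if neither endpoint of $e$ lies in $U_0$ but $e\cap U\neq\emptyset$, then by the construction of $Q_1$ a point $x_e\in e\cap U$ was selected into $Q_1\subseteq Q$. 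Hence $\widetilde{D_1}\setminus Q=\widetilde{D_1}\setminus U$.

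Applying Proposition~\ref{PropComponents5.2} to $Q$, the set $D\setminus Q$ has exactly as many path-connected components $V_1,\ldots,V_q$ as $\widetilde{D_1}\setminus Q$ has connected components, and each $V_i$ contains the vertices of one such component. Since $\widetilde{D_1}\setminus Q=\widetilde{D_1}\setminus U$, I can index the $V_i$ so that $V_i\cap D_0=W_i\cap D_0$. To get $W_i\subset V_i$, take $x\in W_i$; by Lemma~\ref{Lem:Vertex5.1} applied inside $D\setminus U$ there exists $v\in D_0$ with $[v,x]\subset D\setminus U\subset D\setminus Q$, so $v\in W_i\cap D_0=V_i\cap D_0$ and the segment $[v,x]$ shows $x\in V_i$. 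Finally $W_i=V_i\setminus U$: the inclusion $\subseteq$ is clear since $W_i\subseteq V_i$ and $W_i\cap U=\emptyset$, while if $x\in V_i\setminus U$ then $x\in W_j$ for some $j$, and by the already-proved $W_j\subseteq V_j$ together with $x\in V_i$ we get $V_i=V_j$, hence $j=i$.

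The only non-routine part is the edge-identity in the second paragraph. The subtle point is case (b): for an edge with exactly one endpoint in $U_0$, no $x_e$ was selected (the definition of $Q_1$ excludes such edges), yet we still need $e\cap Q\neq\emptyset$ — which holds simply because the endpoint in $U_0$ is itself a vertex of $Q$. One must be careful not to confuse the rule ``delete edges that \emph{intersect} $U$'' (or $Q$) with ``delete edges \emph{contained in} $U$''; once this is respected, all three cases yield the same deletion set.
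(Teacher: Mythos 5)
Your proposal is correct and follows essentially the same route as the paper: establish $Q\subseteq U$, observe that $Q$ and $U$ delete the same vertices ($U_0$) and cut the same edges so that $\widetilde{D_1}\setminus Q=\widetilde{D_1}\setminus U$, and then apply Proposition~\ref{PropComponents5.2}. You merely spell out details the paper leaves implicit (the edge case analysis and the verification of $W_i\subset V_i$ and $W_i=V_i\setminus U$ via Lemma~\ref{Lem:Vertex5.1}), which is fine.
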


\begin{proof} To prove this statement about the path-connected
components, let us mention that $Q$ and $U$ include the same
vertices of $D$, the set $U_0$, and cut the same edges of $D$.
Graphs $\widetilde{D_1}\setminus Q$ and $\widetilde{D_1}\setminus U$
coincide. $Q \subset U$ because of the convexity of $U$ and
definition of $Q$. To finalize the proof, we can apply
Proposition~\ref{PropComponents5.2}.
\end{proof}

\begin{proposition}\label{Propos5.4}Let $I$ be any set of
indices from $\{1, \ldots , q\}$.
\begin{equation}\label{Eq:QandV_i}
Q\bigcup \left(\bigcup_{i \in I} V_i \right)=
{\rm conv} \left(U_0\bigcup Q_1 \bigcup \left(\bigcup_{i \in I}
\left(D_0 \bigcap V_i \right) \right)\right)
\end{equation}
\end{proposition}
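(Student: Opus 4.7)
Denote $K = Q \cup \bigcup_{i \in I} V_i$ and $A = U_0 \cup Q_1 \cup \bigcup_{i \in I}(D_0 \cap V_i)$; the identity to be proved is $K = \mathrm{conv}(A)$. The inclusion $\mathrm{conv}(A) \subseteq K$ is immediate: from $U_0 \cup Q_1 \subseteq Q$ and $D_0 \cap V_i \subseteq V_i$ one has $A \subseteq K$, and Proposition~\ref{Prop5.3Conv}, applied with the convex polytope $Q$ in the role of $U$ and its $D$-complementary components $V_j$ in the role of $W_j$, shows that $K$ is convex, so its convex hull stays inside $K$.

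For the opposite inclusion, my plan is to exhibit $K$ as a compact convex set whose extreme points all lie in the finite set $A$, and then invoke the elementary fact that a compact convex set in $\mathbb{R}^n$ with finitely many extreme points equals the convex hull of them. Compactness follows from the identity $K = D \setminus \bigcup_{j \notin I} V_j$: each $V_j$ is a path-connected component of $D \setminus Q$, which is open in the locally path-connected polyhedron $D$ (here $Q$ is closed as the convex hull of a finite set), so each $V_j$ is itself open in $D$; hence $K$ is closed in the bounded set $D$, and thus compact.

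Now let $p$ be an extreme point of $K$. If $p \in Q$, then $Q \subseteq K$ together with the convexity of $Q$ forces $p$ to be extreme in $Q$ as well, hence a vertex of $Q$; by Lemma~\ref{Lemma5.3D} this gives $p \in U_0 \cup Q_1 \subseteq A$. If instead $p \in V_i$ for some $i \in I$, fix a $D$-neighborhood of $p$ contained in $V_i \subseteq K$; any nontrivial decomposition $p = (1-t) u + t v$ with $u, v \in D$, $u \neq v$, $t \in (0, 1)$ can be shrunk by replacing $u, v$ with $u_\varepsilon = (1-\varepsilon) p + \varepsilon u$ and $v_\varepsilon = (1-\varepsilon) p + \varepsilon v$ for small $\varepsilon > 0$: these still lie in $D$, fall inside the chosen neighborhood (hence in $K$), are distinct, and satisfy $p = (1-t) u_\varepsilon + t v_\varepsilon$, contradicting the extremality of $p$ in $K$. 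Therefore $p$ is extreme in $D$, hence a vertex, and $p \in D_0 \cap V_i \subseteq A$.

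The crux of the argument is the shrinking step in the second case, which crucially exploits that $V_i$ is \emph{open} in $D$: it is here that the topological content of being a path-connected component of an open set in a locally path-connected space is used, rather than just the set-theoretic decomposition. Once all extreme points of $K$ are located in $A$, writing $K$ as the convex hull of its extreme points completes the proof.
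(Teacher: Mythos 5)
Your proof is correct, but the reverse inclusion is argued along a genuinely different route from the paper's. The paper, after establishing convexity of the left-hand side via Proposition~\ref{Prop5.3Conv} and the easy inclusion exactly as you do, looks at the path-connected components of $D\setminus R_I$ (where $R_I$ is the convex hull on the right): the sets $V_j$, $j\notin I$, are components of $D\setminus R_I$, and by Lemma~\ref{Lem:Vertex5.1} every component of the complement of a convex set must contain a vertex of $D$; since all vertices outside $\bigcup_{j\notin I}V_j$ lie in $R_I$ by construction, these are the only components, so $D\setminus R_I=\bigcup_{j\notin I}V_j$ and hence $R_I$ equals the left-hand side. You instead show that $K=Q\cup\bigcup_{i\in I}V_i$ is compact (using that the components $V_j$ of $D\setminus Q$ are open in $D$, a topological fact the paper never needs) and convex, locate all its extreme points inside the finite set $U_0\cup Q_1\cup\bigcup_{i\in I}(D_0\cap V_i)$ via the shrinking argument and Lemma~\ref{Lemma5.3D}, and invoke Minkowski's theorem that a compact convex subset of $\mathbb{R}^n$ is the convex hull of its extreme points. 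Both arguments are sound; the paper's stays entirely within the machinery it has already built (cutting by a convex set, Lemma~\ref{Lem:Vertex5.1}) and avoids any appeal to Krein--Milman-type results or to local path-connectedness, while yours yields slightly more as a by-product, namely an explicit localization of the extreme points of $K$, at the cost of importing the classical Minkowski theorem and the openness of the components $V_j$, and of using compactness (boundedness) of $D$ explicitly.
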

\begin{proof}On the left hand side of (\ref{Eq:QandV_i}) we see
the union of $Q$ with the connected components $V_i$ ($i \in
I$). On the right hand side there is a convex envelope of a
finite set. This finite set consists of the vertices of $Q$,
($U_0\cup Q_1$) and the vertices of $D$ that belong to $V_i$
($i \in I$). Let us denote by $R_I$ the right hand side of
(\ref{Eq:QandV_i}) and by $L_I$ the left hand side of
(\ref{Eq:QandV_i}).

$L_I$ is convex due to Proposition~\ref{Prop5.3Conv} applied to $Q$ and $V_i$. The
inclusion $R_I \subseteq L_I$ is obvious because $L_I$ is convex and $R_I$ is defined as
a convex hull of a subset of $L_I$. To prove the inverse inclusion, let us consider the
path-connected components of $D \setminus R_I$. Sets $V_j$ ($j \notin I$) are the
path-connected components of $D \setminus R_I$ because they are the path-connected
components of $D \setminus Q$, $Q \subset R_I$ and $R_I \cap V_j = \emptyset$ for $j
\notin I$. There exist no other path-connected components of $Q \subset R_I$ because all
the vertices of $V_i$ ($i \in I$) belong to $R_I$ by construction, hence, $D_0 \setminus
R_I \subset \cup_{j \notin I} V_j$. Due to Lemma~\ref{Lem:Vertex5.1} every path-connected
component of $D \subset R_I$ includes at least one vertex of $D$. Therefore, $V_j$ ($j
\notin I$) are all the path-connected components of $D \setminus R_I$ and $D \setminus
R_I= \cup_{j \notin I} V_j$. Finally, $R_I = D \setminus \cup_{j \notin I} V_j = Q\cup
\left( \cup_{i \in I} V_i\right)=L_I$. \end{proof}

According to Lemma~\ref{Lemma5.3D}, each path-connected component $W_i \subset D\setminus
U$ can be represented in the form $W_i = V_i \setminus U $, where $V_i$ is a
path-connected component of $D \setminus Q$. By construction, $Q \subset U$, hence
\begin{equation}\label{cONNECTEDwVERv}
W_i=(Q\cup V_i) \setminus U\, .
\end{equation}
If $U$ is given by a system of inequalities then
representations (\ref{Eq:QandV_i}) and (\ref{cONNECTEDwVERv})
give us the possibility to represent $W_i$ by inequalities.
Indeed, the convex envelope of a finite set in
(\ref{Eq:QandV_i}) may be represented by a system of linear
inequalities. If the sets $Q\cup V_i$ and $U$ in
(\ref{cONNECTEDwVERv})  are represented by inequalities then
the difference between them is also represented by the system
of inequalities.

The description of the path-connected component of $D \setminus U$ may be constructed by
the following steps:
\begin{enumerate}
\item{Construct the graph of the 1-skeleton of $D$, this is
    $\widetilde{D_1}$;}
\item{Find the vertices of $D$ that belong to $U$, this is
    the set $U_0$;}
\item{Find the edges of $D$ that intersect $U$, this is the
    set $U_1$.}
\item{Delete from $\widetilde{D_1}$ all the vertices from
    $U_0$ and the edges from $U_1$, this is the graph
    $\widetilde{D_1}\setminus U$;}
\item{Find all the connected components of
    $\widetilde{D_1}\setminus U$. Let the sets of vertices
    of these connected components be $V_{01}, \ldots ,
    V_{0q}$;}
\item{Select the edges $e$ of $D$ which intersect $U$ but
    the intersection $e\cap U$ does not include vertices of
    $D$. For every such an edge  select one point $x_e \in
    e\cap U$. The set of these points is $Q_1$.}
\item{For every $i=1, \ldots , q$ describe the polyhedron
    $R_i={\rm conv} (U_0 \cup Q_1 \cup V_{0i})$;}
\item{There exists $ q$ path-connected components of $D \setminus U$: $W_i= R_i
    \setminus U$.}
\end{enumerate}
Every step can be performed by known algorithms including
algorithms for the solution of the double description problem
\cite{Motzkin1953,Chernikova1965,Fukuda1996} and the convex
hull algorithms \cite{PreparataShamos1985}.

\begin{figure}
\centering
\includegraphics[width=0.7\textwidth]{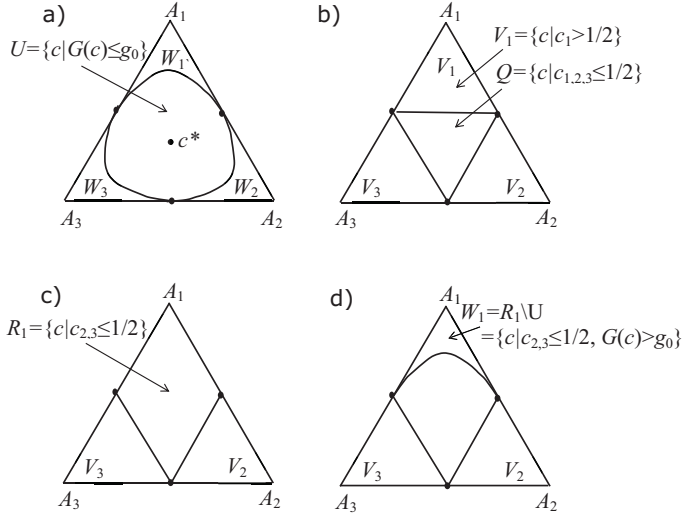}
\caption{Construction of the path-connected components $W_i$ of $D\setminus U$
for the simple example. (a) The balance simplex $D$, the set $U$ and the
path-connected components $W_i$; (b) The polyhedron $Q= {\rm conv}(U_0 \cup
Q_1)$ (\ref{Qdef}) ($U_0=\emptyset$, $Q_1$ consists of the middles of the
edges) and the connected components $V_i$ of $D\setminus Q$: $V_i=\{c\in D\, |
\, c_i>1/2\}$; (c) The set $R_1={\rm conv}(U_0\cup Q_1 \cup (D_0 \cap V_1 ) )$;
(d) The connected components $W_1$ described by the inequalities (as
$R_1\setminus U$ (\ref{Eq:QandV_i})).\label{2DConnectedComponents}}
\end{figure}

Let us use the simple system of three reagents, $A_{1,2,3}$ (Fig.~\ref{2Dtree}) to
illustrate the main steps of the construction of the path-connected components. The
polyhedron $D$ is here the 2D simplex (Fig.~\ref{2Dtree}a). The plane ${\rm Aff } D$ is
given by the balance equation $c_1+c_2+c_3=1$. We select $U=\{c\, |\, G(c)\leq g_0\}$ as
an example of a convex set (Fig.~\ref{2DConnectedComponents}a). It includes no vertices
of $D$, hence, $U_0=\emptyset$. $U$ intersects each edge of $D$ in the middle point,
hence, $U_1$ includes all the edges of $D$. The graph $\widetilde{D_1}\setminus U$
consists of three isolated vertices. Its connected components are these isolated
vertices. $Q_1$ consists of three points, the middles of the edges $(1/2,1/2,0)$,
$(1/2,0,1/2)$ and $(0,1/2,1/2)$ (in this example, the choice of these points is
unambiguous, Fig.~\ref{2DConnectedComponents}a).

The polyhedron $Q$ is a convex hull of these three points, that
is the triangle given in ${\rm Aff}(D)$ by the system of three
inequalities $c_{1,2,3}\leq 1/2$
(Fig.~\ref{2DConnectedComponents}b). The connected components
of $D \setminus Q$ are the triangles $V_i$ given in $D$ by the
inequalities $c_i >1/2$. In the whole $\mathbb{R}^3$, these
sets are given by the systems of an equation and inequalities:
$$V_i=\{ c\, | \, c_{1,2,3}\geq 0, \, c_1+c_2+c_3=1, \, c_i
>1/2\}\, .$$

The polyhedron $R_i$ is the convex hull of four points, the
middles of the edges and the $i$th vertex
(Fig.~\ref{2DConnectedComponents}c). In $D$, $R_i$ is given by
two linear inequalities, $c_j \leq 1/2, \, j\neq i$. In the
whole $\mathbb{R}^3$, these inequalities should be supplemented
by the equation and inequalities that describe $D$:
$$R_i=\{ c\, | \, c_{1,2,3}\geq 0, \, c_1+c_2+c_3=1, \, c_j
\leq 1/2\, (j\neq i)\}\, .$$

The path-connected components of $D \setminus U$, $W_i$ are described as $R_i \setminus
U$ Fig.~\ref{2DConnectedComponents}d): in $D$ we get $W_i=\{c \, | \, c_j \leq 1/2 \, (j
\neq i), \, G(c)>g_0\}$. In the whole $\mathbb{R}^3$,
$$W_i=\{ c\, | \, c_{1,2,3}\geq 0, \, c_1+c_2+c_3=1, \, c_j
\leq 1/2\, (j\neq i),\, G(c)>g_0 \}\,  .$$

$V_i$ are convex sets in this simple example, therefore, it is
possible to simplify slightly  the description of the
components $W_i$ and to represent them as $V_i\setminus U$:
$$W_i=\{ c\, | \, c_{1,2,3}\geq 0, \, c_1+c_2+c_3=1, \, c_i>1/2,\, G(c)>g_0 \}\,  $$
(or $W_i=\{c \, | \, c_i > 1/2, \, G(c)>g_0\}$ in $D$).

In the general case (more components and balance conditions),
the connected components $V_i$ may be non-convex, hence,
description of these sets by the systems of linear equations
and inequalities may be impossible. Nevertheless, there exists
another version of the description of $W_i$ where a smaller
polyhedron is used instead of $R_i$

Let $V_{0i}$ be the set of vertices of a connected component of
the graph  $\widetilde{D_1}\setminus U$. Let $E_{\rm
out}(V_{0i})$ be the set of the {\em outer edges} of $V_{0i}$
in $\widetilde{D_1}$ i.e., this is the set of edges of
$\widetilde{D_1}$ that connect vertices from $V_{0i}$ with
vertices from $D_0\setminus V_{0i}$. For each $e\in E_{\rm
out}(V_{0i})$ the corresponding edge  $e \subset D_1$
intersects $U$ because $V_{0i}$ is the set of vertices of a
connected component of the graph  $\widetilde{D_1}\setminus U$.

Let us select a point $x_e \in U \cap e$ for each $e\in E_{\rm
out}(V_{0i})$ (we use the same notations for the edges from
$\widetilde{D_1}$ and the corresponding edges from ${D_1}$).
Let us use the notation $$Q_{0i}=\{x_e \, | \, e\in E_{\rm
out}(V_{0i})\} \, .$$

\begin{proposition}\label{Prop:ShortCharact}The path-connected component $W_i$ of
$D\setminus U$ allows the following description:
$$W_i={\rm conv}(Q_{0i}\cup V_{0i})\setminus U\, .$$
\end{proposition}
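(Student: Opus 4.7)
The proposition has two inclusions. The forward inclusion $\mathrm{conv}(Q_{0i}\cup V_{0i})\setminus U\subseteq W_i$ is handled by a convexity argument: since by construction $Q_{0i}\subseteq U$ and $V_{0i}\subseteq W_i$, the generating set $Q_{0i}\cup V_{0i}$ lies in the convex set $U\cup W_i$ (the convexity of this union is Proposition~\ref{Prop5.3Conv} with $I=\{i\}$). Taking convex hulls, $T:=\mathrm{conv}(Q_{0i}\cup V_{0i})\subseteq U\cup W_i$, and subtracting $U$ from both sides (using $W_i\cap U=\emptyset$) yields $T\setminus U\subseteq W_i$.

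For the reverse inclusion $W_i\subseteq T$, I would build on Proposition~\ref{Propos5.4}, which already gives $W_i=R_i\setminus U$ for $R_i:=\mathrm{conv}(U_0\cup Q_1\cup V_{0i})$. Since $T\subseteq R_i$, it suffices to prove the sharper inclusion $R_i\setminus T\subseteq U$. Because $\mathrm{conv}(U_0\cup Q_1)\subseteq U$ by convexity of $U$, I would rewrite $R_i=\mathrm{conv}(T\cup C)$ with $C:=\mathrm{conv}(U_0\cup Q_1)\subseteq U$, and decompose any $y\in R_i$ as $y=(1-\lambda)t+\lambda c$ with $t\in T$, $c\in C$, $\lambda\in[0,1]$. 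If $y\notin U$ then $t\notin U$ (else convexity of $U$ would place the whole segment $[t,c]$ in $U$), so $t\in T\setminus U\subseteq W_i$. Along the segment $[t,c]$, let $p$ be the unique point at which $[t,c]$ meets $\partial U$; then $y\in[t,p]$, since otherwise $y\in(p,c]\subseteq U$, a contradiction.

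The decisive step---and what I expect to be the main obstacle---is to show that this crossing point $p$ lies in $T$. Once this is established, convexity of $T$ gives $[t,p]\subseteq T$, whence $y\in T$, and the proof is complete. To verify $p\in T$, I would deform the segment $[t,c]$ onto the 1-skeleton $D_1$ by the same kind of face-by-face projection used in the proof of Lemma~\ref{LemSkeleton5.2}, keeping the component-structure of its endpoints intact. On $D_1$, $\partial U$ meets only those edges that $U$ cuts; among these, the only edges separating $W_i$ from the rest of $D\setminus U$ are the outer edges of $V_{0i}$, and on each such edge $e$ the chosen representative $x_e\in Q_{0i}\subseteq T$ is on the $V_{0i}$-side of $U\cap e$, so the skeletonized crossing point lies on a subsegment $[v,x_e]\subseteq T$ with $v\in V_{0i}$. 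Reversing the deformation then transports the original crossing point $p$ into $T$, finishing the argument.
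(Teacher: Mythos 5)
Your forward inclusion is correct and is exactly the paper's convexity argument (Proposition~\ref{Prop5.3Conv} with $I=\{i\}$ gives ${\rm conv}(Q_{0i}\cup V_{0i})\subseteq U\cup W_i$). The reverse inclusion, however, has a genuine gap at the step you yourself flag as decisive: the claim that the crossing point $p$ lies in $T={\rm conv}(Q_{0i}\cup V_{0i})$. That claim carries essentially the whole content of the proposition -- since $[t,p)\subseteq D\setminus U$ is connected to $t\in W_i$, one has $[t,p)\subseteq W_i$, so ``$p\in T$'' is tantamount to the inclusion $W_i\subseteq T$ you are trying to prove -- and the deformation argument you sketch does not establish it. The face-by-face projection in Lemma~\ref{LemSkeleton5.2} is defined for paths that \emph{avoid} $U$ and join vertices of $D$; your segment $[t,c]$ ends at a non-vertex point $c$ inside $U$, so that machinery does not apply, and even for a path one could legitimately skeletonize, the projections are many-to-one maps pushing points outward onto faces: there is no ``reversing the deformation'' that transports membership in $T$ from the skeletonized picture back to the original interior point $p$. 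Generically $p$ lies deep in the interior of $D$, far from every edge and from every chosen point $x_e$, so no local statement about the outer edges and the position of $x_e$ on them can reach it. Two smaller slips: the asserted equality $R_i={\rm conv}(T\cup C)$ can fail when the points $x_e\in Q_{0i}$ are chosen on the far side of $U\cap e$ from the points used in $Q_1$ (only the inclusion $R_i\subseteq{\rm conv}(T\cup C)$ holds in general, which is all you actually use), and $x_e$ lies in $U\cap e$, so it is not ``on the $V_{0i}$-side of $U\cap e$''.

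What is missing is the component-counting argument on which Lemma~\ref{Lemma5.3D-2} and Proposition~\ref{Propos5.4} rest, and which the paper's one-line proof invokes. Because $Q_{0i}\subseteq U$ contains a point of every outer edge of $V_{0i}$ and no vertex of $V_{0i}$, the set $V_{0i}$ is exactly the vertex set of a connected component of $\widetilde{D_1}\setminus Q_{0i}$: it remains connected there because $\widetilde{D_1}\setminus {\rm conv}(Q_{0i})$ retains everything retained by $\widetilde{D_1}\setminus U$, and it is cut off because every outer edge is deleted. Now suppose some $x\in W_i$ lay outside $T$. Lemma~\ref{Lem:Vertex5.1}, applied to the convex set $T$, gives a vertex $v$ with $[x,v]\cap T=\emptyset$; in particular $v\notin V_{0i}$. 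Concatenating $[v,x]$ (which avoids $T\supseteq{\rm conv}(Q_{0i})$) with a path inside $W_i$ from $x$ to some vertex $w\in V_{0i}$ (which avoids $U\supseteq{\rm conv}(Q_{0i})$) produces a path from $v$ to $w$ avoiding ${\rm conv}(Q_{0i})$; Lemma~\ref{LemSkeleton5.2} and Proposition~\ref{PropComponents5.2} then place $v$ and $w$ in the same component of $\widetilde{D_1}\setminus Q_{0i}$, contradicting the previous sentence. This global ``every component of the complement of a convex set contains a vertex, and connectivity can be tested on the 1-skeleton'' step is precisely what your pointwise crossing-point analysis cannot replace.
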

\begin{proof}
The set $Q_{0i} \subset U$ and $V_{0i}$ is the set of vertices
of a connected component of the graph $\widetilde{D_1}\setminus
Q_{0i}$ by construction because $Q_{0i}$ cuts all the outer
edges of $V_{0i}$ in $D_1$. The rest of the proof follows the
proofs of Lemma~\ref{Lemma5.3D-2} and
Proposition~\ref{Propos5.4}.
\end{proof}

This proposition allows us to describe $W_i$ by the system of
inequalities. For this purpose, we have to use a convex hull
algorithm and describe the convex hull ${\rm conv}(Q_{0i}\cup
V_{0i})$ by the system of linear inequalities and then add the
inequality that describes the set $\setminus U$.

In the simple system (Fig.~\ref{2DConnectedComponents}), the connected
components of the graph $\widetilde{D_1}\setminus U$ are the isolated vertices.
The set $Q_{0i}$ for the vertex $A_i$ consists of two middles of its  incident
edges. In Fig.~\ref{2DConnectedComponents}b, the set ${\rm conv}(Q_{0i}\cup
V_{0i})$ for $V_{0i}=\{A_1\}$ is the triangle $V_1$.

\section{Thermodynamic tree\label{sec:tree}}

\subsection{Problem statement}
Let a real continuous function $G$ be given in the convex bounded polyhedron
$D\subset \mathbb{R}^n$. We assume that $G$ is {\em strictly convex} in $D$,
i.e. the set (the epigraph of $G$)
    $${\rm epi}(G)=\{(x,g)\, |\, x\in D,\, g\geq G(x)\}\subset
    D\times (- \infty,\infty)$$
is convex and for any segment $[x,y] \subset D$ ($x \neq y$) $G$ is not constant on
$[x,y]$. A strictly convex function on a bounded convex set has a unique minimizer. Let
$x^*$ be the minimizer of $G$ in $D$ and let $g^*=G(x^*)$ be the corresponding minimal
value. The level set $S_g=\{x \in D \, | \, G(x)=g\}$ is closed and the sublevel set
$U_g=\{x \in D \, | \, G(x)<g\}$ is open in $D$.  The sets $S_g$ and $D\setminus U_g$ are
compact and $S_g \subset D\setminus U_g$.

Let $x,y \in D$. According to Corollary~\ref{pathEquivalence} proven in the next
subsection, an admissible path from  $x$ to $y$ in $D$ exists if and only if $\pi(y)$
belongs to the ordered segment $[\pi(x^*),\pi(x)]$. Therefore, to describe constructively
the relation $x\succsim y$ in $D$ we have to solve the following problems:
\begin{enumerate}
\item{How to construct the thermodynamic tree $\mathcal{T}$?}
\item{How to  find an image $\pi(x)$ of a state $x\in D$ on the
    thermodynamic tree $\mathcal{T}$?}
\item{How to describe by inequalities a preimage of an ordered segment of the
    thermodynamic tree, $\pi^{-1}([w,z])\subset D$ ($w,z \in \mathcal{T}$, $z
    \succsim w $)?}
\end{enumerate}

\subsection{Coordinates on the thermodynamic tree \label{Sec:Coordinates}}

We get the following lemma directly from Definition~\ref{Defin:AdmissiblePath}. Let $x,y
\in D$.
\begin{lemma}\label{classConnected}
$x \sim y$ if and only if $G(x)=G(y)$ and $x$ and $y$ belong to the same path-connected
component of $S_g$ with $g=G(x)$.
\end{lemma}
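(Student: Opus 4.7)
The plan is to unpack Definition~\ref{Defin:AdmissiblePath} and observe that a non-increasing real function which starts and ends at the same value must be constant. This single observation handles both implications.

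For the forward direction, assume $x \sim y$, so by definition there exist admissible paths $\varphi_1 : [0,1] \to D$ from $x$ to $y$ and $\varphi_2 : [0,1] \to D$ from $y$ to $x$. Since $G \circ \varphi_1$ is non-increasing on $[0,1]$, we get $G(x) = G(\varphi_1(0)) \geq G(\varphi_1(1)) = G(y)$, and symmetrically from $\varphi_2$ we get $G(y) \geq G(x)$. Hence $G(x) = G(y) =: g$. Applying the same non-increasing property to $\varphi_1$ with equal endpoint values forces $G(\varphi_1(t)) = g$ for every $t \in [0,1]$, so the image of $\varphi_1$ lies entirely in $S_g$. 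This image is a continuous curve in $S_g$ joining $x$ and $y$, placing them in the same path-connected component of $S_g$.

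For the reverse direction, suppose $G(x) = G(y) = g$ and $x, y$ lie in the same path-connected component of $S_g$. Choose a continuous path $\psi : [0,1] \to S_g$ with $\psi(0) = x$ and $\psi(1) = y$. Then $G \circ \psi \equiv g$ on $[0,1]$, which is certainly non-increasing, so $\psi$ is admissible and hence $x \succsim y$. The reversed path $t \mapsto \psi(1-t)$ is likewise admissible, giving $y \succsim x$, and therefore $x \sim y$.

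There is no real obstacle here beyond carefully reading the definitions; the only subtle point is the observation that monotone non-increasing behaviour plus equality of the endpoint values forces $G$ to be constant along the path, which is what lifts the admissible path from $D$ into the level set $S_g$ in the forward direction.
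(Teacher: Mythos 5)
Your proof is correct and is exactly the argument the paper has in mind: the paper states this lemma as following ``directly from Definition~\ref{Defin:AdmissiblePath}'' without writing out a proof, and your write-up simply makes explicit the intended observation that a non-increasing function with equal endpoint values is constant, which confines the admissible path to the level set $S_g$.
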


The path-connected components of $D \setminus U_g$ can be enumerated by the
connected components of the graph $\widetilde{D_1}\setminus U_g$. The following
lemma allows us to apply this result to the path-connected components of $S_g$.

\begin{lemma}\label{Lemma5.7}Let $g>g^*$, $W_1, \ldots, W_q$ be
the path-connected components of $D \setminus U_g$ and let $\sigma_1, \ldots , \sigma_p$
be the path-connected components of $S_g$. Then $q=p$ and $\sigma_i$ may be enumerated in
such a way that $\sigma_i$ is the border of $W_i$ in $D$.
\end{lemma}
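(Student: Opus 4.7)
I would define $\sigma_i := W_i \cap S_g$ for each $i$, then show these sets are nonempty, path-connected, pairwise ``pathwise separated'' in $S_g$, and that each $\sigma_i$ coincides with the topological boundary $\partial_D W_i$. The engine driving all of this is a central projection from $x^*$ onto the level set $S_g$.

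\textbf{Central projection.} Since $g>g^*=G(x^*)$, strict convexity implies that for any $y\in D$ with $G(y)\geq g$ the function $s\mapsto G(x^*+s(y-x^*))$ is strictly convex in $s$ with minimum at $s=0$, hence strictly increasing on $[0,1]$ from $g^*$ to $G(y)$. Therefore there is a unique $s(y)\in(0,1]$ with $G(x^*+s(y)(y-x^*))=g$; set $\pi_{S_g}(y):=x^*+s(y)(y-x^*)$. The map $\pi_{S_g}$ is continuous on $\{G\geq g\}$, is the identity on $S_g$, and the whole segment $[\pi_{S_g}(y),y]$ lies in $\{G\geq g\}=D\setminus U_g$ (by monotonicity of $G$ along the ray from $x^*$). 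In particular, $\pi_{S_g}(y)$ lies in the same path-component of $D\setminus U_g$ as $y$.

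\textbf{Enumeration of components.} For any $y\in W_i$, the segment argument gives $\pi_{S_g}(y)\in W_i\cap S_g=\sigma_i$, so $\sigma_i\neq\emptyset$. For $x,y\in\sigma_i$ and any continuous path $\gamma:[0,1]\to W_i$ from $x$ to $y$, the composition $\pi_{S_g}\circ\gamma$ is a continuous path from $x$ to $y$ lying pointwise in $\sigma_i$; hence $\sigma_i$ is path-connected. Conversely, any continuous path inside $S_g\subseteq D\setminus U_g$ stays in a single $W_i$, so in a single $\sigma_i$. Thus the $\sigma_i$ are exactly the path-connected components of $S_g$, giving $p=q$ under the natural enumeration.

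\textbf{Boundary identification.} I would first verify that $W_i$ is closed in $D$ by a local argument: for $x\in W_i$, a sufficiently small convex $D$-neighborhood $V$ of $x$ has $V\cap(D\setminus U_g)=V\setminus(V\cap U_g)$ path-connected (the complement in the convex set $V$ of the convex set $V\cap U_g$, path-connected for small enough $V$ using strict convexity of $G$), so $V\cap(D\setminus U_g)\subseteq W_i$; hence $W_i$ is open in $D\setminus U_g$, and by finiteness of the family also closed there, thus closed in $D$. Then for $x\in\sigma_i$ the interior of the segment $[x^*,x]$ has $G<g$ by strict convexity and tends to $x$, so $x\in\overline{U_g}^D\subseteq\overline{D\setminus W_i}^D$, and $x\in\partial_DW_i$. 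Conversely, for $x\in\partial_DW_i$, $x\in W_i$ forces $G(x)\geq g$; if $G(x)>g$, a small convex $D$-neighborhood lies in $\{G>g\}\cap W_i$ (by the same local argument), contradicting $x\in\overline{D\setminus W_i}^D$, so $G(x)=g$ and $x\in\sigma_i$.

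\textbf{Main obstacle.} The delicate point is the local path-connectivity of $V\cap(D\setminus U_g)$ at points where $S_g$ meets the relative boundary of $D$: one must use both the strict convexity of $U_g$ and the polyhedral structure of $D$ to rule out pathological local disconnections. Everything else is bookkeeping around the two basic geometric facts, namely that $\pi_{S_g}$ preserves path-components and that strict convexity places $U_g$ arbitrarily close to every point of $S_g$ along rays to $x^*$.
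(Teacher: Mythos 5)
Your construction is, at its core, the paper's own: your $\pi_{S_g}$ is exactly the projection $\theta_g(x)\in[x^*,x]$, $G(\theta_g(x))=g$, used in the paper, and your matching of the components of $D\setminus U_g$ with those of $S_g$ (image of a component is path-connected, the segment $[\theta_g(y),y]$ stays in $\{G\ge g\}$, any path in $S_g$ stays in one $W_i$) is the same argument, so the count $p=q$ is fine.

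The genuine gap is in your boundary identification. You derive ``$W_i$ is closed in $D$'' from the claim that for small convex $D$-neighborhoods $V$ of a point of $S_g$ the set $V\setminus(V\cap U_g)$ is path-connected. As stated this is not a valid principle: the complement of a convex set in a convex set need not be path-connected (already in dimension one, or for a slab cutting through a ball), and shrinking $V$ does not by itself help; any true version must use that the boundary of the strictly convex sublevel set contains no segments, and — as you yourself concede in your ``main obstacle'' paragraph — the case where $S_g$ meets the relative boundary of the polyhedron is left entirely unproved. Since your inclusion $\partial_D W_i\subseteq\sigma_i$ rests on precisely this closedness, the border statement of the lemma is not established by your argument. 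Note that the paper takes a lighter route that avoids this machinery altogether: by continuity of $G$, every boundary point of $D\setminus U_g$ in $D$ lies on $S_g$, so $S_g$ is the border of $D\setminus U_g$ in $D$; the projection $\theta_g$ then identifies $\sigma_i=\theta_g(W_i)=W_i\cap S_g$ as the border of $W_i$, with the interior case $G(z)>g$ disposed of exactly as you do (a small convex $D$-neighborhood lies in $\{G>g\}$ and hence in one component), and no local path-connectivity of $D\setminus U_g$ or closedness of $W_i$ is invoked. If you want to keep your stronger formulation, you must actually prove the separation of components near $S_g\cap r\partial(D)$, e.g. by exploiting the polyhedral structure along the lines of the paper's Section 2, rather than assert it.
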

\begin{proof}
$G$ is continuous in $D$, hence, if $G(x)>g$ then there exists a
vicinity of $x$ in $D$ where $G(x)>g$. Therefore $G(y)=g$ for every
boundary point $y$ of $D\setminus U_g$ in $D$ and $S_g$ is the
boundary of $D\setminus U_g$ in $D$.

Let us define a projection $\theta_g: D\setminus U_g \to S_g$ by the
conditions: $\theta_g (x) \in [x,x^*]$ and $G(\theta_g (x))=g$. By definition,
the inequality $G(x)\geq g$ holds in $D \setminus U_g$. The function
$f_x(\lambda)=G((1-\lambda)x^*+ \lambda x)$ is strictly increasing, continuous
and convex function of $\lambda \in [0,1]$, $f_x(0) = g^*<g$, $f_x(1) =
G(x)\geq g$. The function $f_x(\lambda)$ depends continuously on $x\in
D\setminus U_g $ in the uniform metrics. Therefore, the solution $\lambda_x$ to
the equation $f_x(\lambda)=g$ on $[0,1]$ exists (the intermediate value
theorem), is unique, and continuously depends on $x \in D\setminus U_g$. The
projection $\theta_g$ is defined as $\theta_g(x)=(1-\lambda_x)x^*+ \lambda_x
x$.

The fixed points of the projection $\theta_g$ are elements of $S_g$. The image of each
path-connected component $W_i$ is a path-connected set. The preimage of every
path-connected component $\sigma_i$ is also a path-connected set. Indeed, let $\theta_g
(x) \in \sigma_i$ and $\theta_g (y) \in \sigma_i$. There exists a continuous path from
$x$ to $y$ in $D\setminus U_g$. It may be composed from three paths: (i) from $x$ to
$\theta_g (x)$ along the line segment $[x,\theta_g (x)] \subset [x,x^*]$ then a
continuous path in $\sigma_i$ between $\theta_g (x)$ and $\theta_g (y)$ (it exists
because $\sigma_i$ is a path-connected component of $S_g$ and it belongs to $D\setminus
U_g$ because $S_g \subset D\setminus U_g$) and, finally, from $\theta_g (y)$ to $y$ along
the line segment $[\theta_g (y),y] \subset [x^*,y]$. Therefore, the image of a
path-connected component $W_i$ is a path-connected components of $S_g$ that may be
enumerated by the same index $i$, $\sigma_i$. This $\sigma_i$ is the border of $W_i$ in
$D$.
\end{proof}

The equivalence class of $x \in D$ is defined as  $[x]=\{y\in D \, | \, y \sim x\}$. Let
$W(x)$ be a path-connected component of $D\setminus U_g$ ($g=G(x)$) for which $\theta_g
(W(x))=[x]$. Due to Lemma~\ref{Lemma5.7}, such a component exists and
\begin{equation}\label{Lemma5.9}W(x)=\{y \in D\, | \, y \succsim
x\}\, .
\end{equation}

Let us define a one-dimensional continuum $\mathcal{Y}$ that consists of the pairs
$(g,M)$, where $g^* \leq g \leq g_{\max}$ and $M$ is a set of vertices of a connected
component of $\widetilde{D_1}\setminus U_g$. For each $(g,M)$ the fundamental system of
neighborhoods consists of the sets $V_{\rho}$ ($\rho >0$):
\begin{equation}\label{FundamNeigh}
V_{\rho}=\{(g', M')\, | \,(g', M')\in \mathcal{Y},\, |g-g'|<\rho,\, M'\subseteq M\}\, .
\end{equation}

Let us define the partial order on $\mathcal{Y}$:
$$(g,M) \succsim (g',M') \mbox{ if } g \geq g' \mbox{ and } M \subseteq M'\, .$$
Let us introduce the mapping $\omega : D\to \mathcal{Y}$:
$$\omega (x) = (G(x),W(x)\cap D_0)\, .$$

\begin{theorem}\label{Prop:Coordinate}There exists a homeomorphism
between $\mathcal{Y}$ and $\mathcal{T}$ that preserves the partial order and makes the
following diagram commutative:
$$
\xymatrix{
D \ar[r]^{\pi} \ar[d]_{\omega} &
\mathcal{T} \ar@{<->}[ld] \\
\mathcal{Y}
}$$
\end{theorem}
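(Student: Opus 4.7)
The plan is to define $\omega \colon D \to \mathcal{Y}$ by $\omega(x)=(G(x), W(x)\cap D_0)$, where $W(x) = \{y \in D : y \succsim x\}$ as in~\eqref{Lemma5.9}, descend it through $\pi$ to a bijection $\bar\omega \colon \mathcal{T} \to \mathcal{Y}$, verify order preservation, and upgrade $\bar\omega$ to a homeomorphism by a compact-to-Hausdorff argument. Transitivity of $\succsim$ makes $W(x)$ path-connected, and since $W(x) \subseteq D \setminus U_{G(x)}$ it coincides with the path-connected component of $x$ in $D \setminus U_{G(x)}$. Proposition~\ref{PropComponents5.2} then yields that $W(x)\cap D_0$ is the vertex set of a connected component of $\widetilde{D_1}\setminus U_{G(x)}$, so $\omega(x)\in\mathcal{Y}$. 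Lemma~\ref{classConnected} combined with Lemma~\ref{Lemma5.7} guarantees that $x\sim y$ forces $W(x)=W(y)$, so $\omega$ factors uniquely as $\omega=\bar\omega\circ\pi$, and the diagram commutes by construction.

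Next I verify that $\bar\omega$ is an order-preserving bijection. Injectivity: $\omega(x)=\omega(y)$ gives $G(x)=G(y)$, and since distinct components of $D\setminus U_{G(x)}$ have disjoint vertex sets by Proposition~\ref{PropComponents5.2}, $W(x)=W(y)$; Lemma~\ref{Lemma5.7} then delivers $x\sim y$. Surjectivity: for $(g,M)\in\mathcal{Y}$, Proposition~\ref{PropComponents5.2} produces a path-component $W\subset D\setminus U_g$ with $W\cap D_0=M$, Lemma~\ref{Lemma5.7} provides a nonempty boundary component $\sigma\subset S_g$ inside $W$, and every $x\in\sigma$ satisfies $\omega(x)=(g,M)$. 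For the partial order I use the chain $x\succsim y \iff x\in W(y) \iff W(x)\subseteq W(y)$, the last step combining transitivity with $x\in W(x)$. The containment $W(x)\subseteq W(y)$ is in turn equivalent to $G(x)\ge G(y)$ together with $W(x)\cap D_0\subseteq W(y)\cap D_0$, using Lemma~\ref{Lem:Vertex5.1} to supply a vertex in $W(x)$ and Proposition~\ref{PropComponents5.2} to promote the inclusion of vertex sets to inclusion of the whole components; this is exactly $\omega(x)\succsim\omega(y)$.

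To upgrade $\bar\omega$ to a homeomorphism I apply the classical fact that a continuous bijection from a compact space to a Hausdorff space is a homeomorphism. The quotient $\mathcal{T}=D/\sim$ is compact as the image of compact $D$. The space $\mathcal{Y}$ is Hausdorff: points at distinct levels are separated by shrinking $\rho$; two distinct pairs $(g,M_1)\ne(g,M_2)$ at the same level have $M_1\cap M_2=\emptyset$ by Proposition~\ref{PropComponents5.2}, and since every vertex set $M'\in\mathcal{Y}$ is nonempty (Lemma~\ref{Lem:Vertex5.1}), their basic neighborhoods $V_\rho$, which impose $M'\subseteq M_i$, have empty intersection. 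Hence it suffices to check continuity of $\bar\omega$, and by the universal property of the quotient topology this reduces to continuity of $\omega\colon D\to\mathcal{Y}$.

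The continuity of $\omega$ is the main work. Fix $x$ with $\omega(x)=(g,M)$ and $\rho>0$; the level condition $|G(x')-g|<\rho$ is handled by continuity of $G$. The essential step is to ensure $W(x')\cap D_0\subseteq M$ for $x'$ close to $x$. I exploit that the labeled graph $\widetilde{D_1}\setminus U_{g'}$ is piecewise constant in $g'$, changing only at the finitely many critical values $\{\gamma_v\}\cup\{g_e\}$, so I may shrink $\rho$ so that the interval $(g-\rho,g)$ contains no critical value and $g'' := g-\rho$ is itself not critical. Then the vertex set of the path-component of $x$ in $D\setminus U_{g''}$ still equals $M$. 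Since $G(x)=g>g''$, the set $\{y\in D:G(y)>g''\}$ is open in $D$, so a small convex ball $B_\epsilon(x)\cap D$ is contained in it and hence, by path-connectedness, inside the path-component of $x$ in $D\setminus U_{g''}$. For any $x'\in B_\epsilon(x)\cap D$ with $|G(x')-g|<\rho$, we have $G(x')>g''$ and $x'$ lies in the same component of $D\setminus U_{g''}$ as $x$; then $W(x')$, being the path-component of $x'$ in the smaller set $D\setminus U_{G(x')}\subseteq D\setminus U_{g''}$, is contained in that component, giving $W(x')\cap D_0\subseteq M$. The main obstacle is precisely this bookkeeping around critical values of the graph labels, and the trick of descending to the safe auxiliary level $g''$ below all nearby critical values reduces everything to a simple convex-ball argument.
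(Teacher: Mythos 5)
Your proposal is correct, and for the combinatorial half it follows the paper's own skeleton: like the paper, you work with $\omega(x)=(G(x),W(x)\cap D_0)$ and rely on Lemma~\ref{classConnected}, Lemma~\ref{Lemma5.7}, formula (\ref{Lemma5.9}) and Proposition~\ref{PropComponents5.2} to see that $\omega$ identifies equivalence classes with pairs $(g,M)$ and that $x\succsim y$ holds iff $\omega(x)\succsim\omega(y)$; the paper merely asserts this order equivalence, while your chain $x\succsim y\Leftrightarrow x\in W(y)\Leftrightarrow W(x)\subseteq W(y)\Leftrightarrow \bigl(G(x)\geq G(y)$ and $W(x)\cap D_0\subseteq W(y)\cap D_0\bigr)$ (via Lemma~\ref{Lem:Vertex5.1} and Proposition~\ref{PropComponents5.2}) is a clean justification. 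Where you genuinely diverge is the topological step. The paper argues at the level of neighborhood systems: it rewrites the basic neighborhoods (\ref{FundamNeigh}) purely in terms of the partial order (``compatible'' pairs with $|g-g'|<\rho$) and then identifies the topology of $\mathcal{Y}$ with that of the quotient $D/\sim$ essentially by definition, leaving the continuity of the induced bijection and of its inverse implicit. You instead prove continuity of $\omega\colon D\to\mathcal{Y}$ directly --- using the piecewise constancy of $\widetilde{D_1}\setminus U_g$ in $g$, the auxiliary non-critical level $g''=g-\rho$ (which indeed lies in the same label interval as $g$, so the graph components are unchanged) and a convex ball around $x$ to get $W(x')\cap D_0\subseteq M$ --- then descend through the universal property of the quotient topology and close with the compact-to-Hausdorff criterion, after checking that $\mathcal{T}$ is compact and $\mathcal{Y}$ is Hausdorff (disjointness of vertex sets at a common level plus nonemptiness of each $M$). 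Your route costs the compactness/Hausdorff machinery but supplies exactly the point-set details the paper glosses over, in particular why the inverse of the induced bijection is continuous; the paper's route is shorter and emphasizes that the topology on both sides is encoded by the order, but as written it is only a sketch of the homeomorphism claim, so your more explicit argument is a legitimate and arguably more complete alternative.
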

\begin{proof}According to Lemmas~\ref{Lemma5.7}, \ref{Lemma5.9} and
Proposition~\ref{PropComponents5.2}, $\omega$ maps the equivalent points $x$ to the same
pair $(g,M)$ and the non-equivalent points to different pairs $(g,M)$. For any $x,y \in
D$, $x\succsim y$ if and only if $\omega (x)\succsim \omega (y)$.

The fundamental system of neighborhoods in $Y$ may be defined using this partial order.
Let us say that $(g, M)$ is compatible to $(g',M')$ if $(g', M')\succsim (g,M)$ or  $(g,
M)\succsim (g',M')\}$. Then for $\rho >0$
$$V_{\rho}=\{(g', M')\in \mathcal{Y}\, | \, |\gamma-\gamma'|<\rho \mbox{ and } (g', M') \mbox{
is compatible to } (g,M)\}\, .$$ For sufficiently small $\rho$ this definition coincides with (\ref{FundamNeigh}).

So, by the definition of $\mathcal{T}$ as a quotient space $D/ \sim$,  $\mathcal{Y}$ has
the same partial order and topology as $\mathcal{T}$. The isomorphism between
$\mathcal{Y}$ and $\mathcal{T}$ establishes one-to-one correspondence between the
$\pi$-image of the equivalence class $[x]$, $\pi([x])$, and the $\omega$-image of the
same class, $\omega([x])$.
\end{proof}

$\mathcal{Y}$ can be considered as a coordinate system on $\mathcal{T}$. Each point is
presented as a pair $(g,M)$ where $g^* \leq g \leq g_{\max}$ and $M$ is a set of vertices
of a connected component of $\widetilde{D_1}\setminus U_g$. The map $\omega$ is the
coordinate representation of the canonical projection $\pi: D \to \mathcal{T}$. Now, let
us use this coordinate system and the proof of Theorem~\ref{Prop:Coordinate} to obtain
the following corollary.

\begin{corollary}\label{pathEquivalence}
An admissible path from  $x$ to $y$ in $D$ exists if and only if
$$\pi(y) \in [\pi(x^*),\pi(x)]\, .$$
\end{corollary}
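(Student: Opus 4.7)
The plan is to observe that the statement is essentially an unwinding of definitions once Theorem~\ref{Prop:Coordinate} is in hand, so the proof should be short. By Definition~\ref{Defin:AdmissiblePath}, an admissible path from $x$ to $y$ exists precisely when $x\succsim y$, and by construction of the partial order on $\mathcal{T}$ this is equivalent to $\pi(x)\succsim\pi(y)$. By Definition~\ref{Defin:OderSeg}, $\pi(y)\in[\pi(x^*),\pi(x)]$ means exactly $\pi(x)\succsim\pi(y)\succsim\pi(x^*)$. So the task splits into two implications, one of which is purely formal.

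\medskip

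For the forward direction, assume an admissible path from $x$ to $y$ exists. Then $x\succsim y$, hence $\pi(x)\succsim\pi(y)$. It remains to check $\pi(y)\succsim\pi(x^*)$, i.e. that \emph{every} state of $D$ precedes $x^*$. For this I would use the line segment from $y$ to $x^*$: by strict convexity of $G$ and uniqueness of the minimizer $x^*$, the function $f_y(\lambda)=G((1-\lambda)x^*+\lambda y)$ is strictly increasing on $[0,1]$ (this is precisely the fact exploited in the proof of Lemma~\ref{Lemma5.7}), so the straight path from $y$ to $x^*$ is admissible. Therefore $y\succsim x^*$, giving $\pi(y)\succsim\pi(x^*)$, and thus $\pi(y)\in[\pi(x^*),\pi(x)]$.

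\medskip

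For the reverse direction, assume $\pi(y)\in[\pi(x^*),\pi(x)]$. Then in particular $\pi(x)\succsim\pi(y)$, which by definition of the induced order on $\mathcal{T}=D/\!\sim$ means $x\succsim y$, i.e. an admissible path from $x$ to $y$ exists. (The condition $\pi(y)\succsim\pi(x^*)$ is automatic and is not used here; it is simply consistent with the forward direction.)

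\medskip

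There is no real obstacle: the only substantive ingredient is the observation that the radial path $y\to x^*$ is admissible, which follows immediately from strict convexity together with Theorem~\ref{Prop:Coordinate} identifying $\mathcal{T}$ with $\mathcal{Y}$ as ordered spaces. I would keep the proof to a few lines, citing Theorem~\ref{Prop:Coordinate} and Definitions~\ref{Defin:AdmissiblePath} and~\ref{Defin:OderSeg} for the bookkeeping and spending one sentence on the monotonicity of $f_y(\lambda)$.
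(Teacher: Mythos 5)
Your proof is correct, but it takes a genuinely lighter route than the paper's. You treat the ``if'' direction as definitional: since the order on $\mathcal{T}$ is by construction the projection of $\succsim$ (i.e.\ $\pi(x)\succsim\pi(y)$ iff $x\succsim y$), the condition $\pi(y)\in[\pi(x^*),\pi(x)]$ immediately gives $x\succsim y$; the only substantive ingredient you need is $y\succsim x^*$ via the radial segment $[y,x^*]$, along which $G$ is non-increasing towards $x^*$ by convexity and minimality of $x^*$ (incidentally, ``strictly increasing'' is more than you need and fails when $y=x^*$; non-increasing along the traversal suffices). The paper instead proves the ``if'' direction constructively: writing $\pi(x)=(G(x),M)$ in the coordinates of Theorem~\ref{Prop:Coordinate}, it picks a vertex $v\in M$, takes the points $x_1,y_1$ on the segment $[x^*,v]$ with $G(x_1)=G(x)$, $G(y_1)=G(y)$, and concatenates a path inside the level-set component of $S_{G(x)}$ from $x$ to $x_1$ (Lemma~\ref{classConnected}), the straight descent from $x_1$ to $y_1$, and a path inside the component of $S_{G(y)}$ from $y_1$ to $y$. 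What the paper's version buys is an explicit admissible path and, more importantly, a verification of the statement when the ordered segment is read through the coordinate pairs $(g,M)$ --- which is how the corollary is actually used in the attainability algorithm; your shortcut silently delegates that combinatorial-order-versus-path content to the assertion ``$x\succsim y$ iff $\omega(x)\succsim\omega(y)$'' inside the proof of Theorem~\ref{Prop:Coordinate} rather than re-deriving it. As a proof of the corollary as literally stated, yours is valid and shorter; as support for the later algorithmic use, the paper's constructive argument carries more weight.
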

\begin{proof}
Let there exist an admissible path from  $x$ to $y$ in $D$, $\varphi : [0,1] \to D$. Then
$\pi(x) \succsim \pi (y)$ in $\mathcal{T}$. Let $\pi(x)=(G(x),M)$ in coordinates
$\mathcal{Y}$. For any $v\in M$, $\pi(y)\in [\pi(x^*),\pi(v)]$ and $\pi(x)\in
[\pi(x^*),\pi(v)]$.

Assume now that $\pi(y) \in [\pi(x^*),\pi(x)]$ and $\pi(x)=(G(x),M)$. Then  the
admissible path from $x$ to $y$ in $D$ can be constructed as follows. Let $v\in M$ be a
vertex of $D$. $G(v)\geq G(x)$ for each $v\in M$.  The straight line segment $[x^*,v]$
includes a point $x_1$ with $G(x_1)=G(x)$ and $y_1$ with $G(y_1)=G(y)$. Coordinates of
$\pi (x_1)$ and $\pi(x)$ in $\mathcal{Y}$ coincide as well as coordinates of $\pi (y_1)$
and $\pi(y)$. Therefore, $x \sim x_1$ and $y\sim y_1$. The admissible path from $x$ to
$y$ in $D$ can be constructed as a sequence of three paths: first, a continuous path from
$x$ to $x_1$ inside the path-connected component of $S_{G(x)}$
(Lemma~\ref{classConnected}), then from $x_1$ to $y_1$ along a straight line and after
that a continuous path from $y_1$ to $y$ inside the path-connected component of
$S_{G(y)}$.
\end{proof}

To describe the space $\mathcal{T}$ in coordinate representation $\mathcal{Y}$, it is
necessary to find the connected components of the graph $\widetilde{D_1}\setminus U_g$
for each $g$. First of all, this function, $$g \mapsto \mbox{ the set of connected
components of } \widetilde{D_1}\setminus U_g\, ,$$ is piecewise constant. Secondly, we do
not need to solve at each point  the computationally heavy problem of the construction of
the connected components of the graph $\widetilde{D_1}\setminus U_g$ ``from scratch". The
problem of the parametric analysis of these components as functions of $g$ appears to be
much cheaper. Let us present a solution of this problem. At the same time, this is a
method for the construction of the thermodynamic tree in coordinates $(g,M)$.

The coordinate system $\mathcal{Y}$ allows us to describe the {\em tree structure} of the
continuum $\mathcal{T}$. This structure  includes a root, $(g^*,D_0)$, edges, branching
points and leaves.

Let  $M$ be a connected component of $\widetilde{D_1}\setminus U_g$ for some $g$, $g^*
<g< g_{\max}$. If $M\subsetneqq D_0$ then the set of all points $(g,M)\in \mathcal{T}$
has for a given $M$ the form $(\underline{a}_M,\overline{a}_M] \times M$,
$\underline{a}_M<\overline{a}_M$. We call this set an {\em edge} of $\mathcal{T}$.

If $M$ includes all the vertices of $D$ ($M=D_0$) then the set of all points $(g,M)\in
\mathcal{T}$ has the form $[g^*,\overline{a}_{D_0}] \times D_0$. This may be either an
edge (if $\overline{a}_{D_0}> g^*$) or just a {\em root}, $\{(g^*,D_0)\}$, (this is
possible in 1D systems).

Let us define the numbers $\underline{a}_M=\inf\{g\, | \, (g,M)\in \mathcal{T}\}$. Let us
introduce the set of outer edges of $M$ in $\widetilde{D_1}$,  $E_{\rm out}(M)$. This is
the set of edges of $\widetilde{D_1}$ that connect vertices from $M$ with vertices from
$D_0\setminus M$. We keep the same notation, $E_{\rm out}(M)$, for the set of the
corresponding edges of $D$.
\begin{equation}\label{BranchingValue}
 \underline{a}_M=\max_{e\in E_{out}(M)} \min \{G(x) \, | \, x\in e\}\, .
\end{equation}
This number,  $\underline{a}_M$, is the ``cutting value" of $G$ for $M$. It cuts $M$ from
the other vertices of $\widetilde{D_1}$ in the following sense: if we  delete from
$\widetilde{D_1}$ all the edges $e$ with the label values $<
 \underline{a}_M$ then $M$ will remain attached to some vertices from $D_0\setminus M$.
If we delete the edges with the label values $\leq  \underline{a}_M$ then $M$ becomes
disconnected from $D_0\setminus M$. There is the only connected component of
$\widetilde{D_1}\setminus U_{\underline{a}_M}$ that includes $M$, $M' \supsetneqq M$. The
pair $(\underline{a}_M,M')\in \mathcal{T}$ is a {\em branching point} of $\mathcal{T}$.
The edge $(\underline{a}_M,\overline{a}_M] \times M$ connects two vertices, the upper
vertex $(\overline{a}_M,M)$ and the lower vertex, $(\underline{a}_M,M')$.

If $M$ consists of one vertex, $M=\{v\}$, then the point $(G(v),\{v\})$ is a {\em leaf}
of $\mathcal{T}$.

\subsection{Construction of the thermodynamic tree\label{Sec:Algorithm}}

To construct the tree of $G$ in $D$  we need the graph $\widetilde{D_1}$ of the
1-skeleton of the polyhedron $D$. Elements of $\widetilde{D_1}$ should be labeled by the
values of $G$. Each vertex $v$ is labeled by the value $\gamma_v=G(v)$ and each edge
$e=[v,w]$ is labeled by the minimal value of $G$ on the segment $[v,w]\subset D$,
$g_e=\min_{[v,w]}G(x)$. We need also the minimal value $g^*=\min_{D}\{G(x)\}$ because the
root of the tree is $(g^*,D_0)$.

The strictly convex function $G$ achieves its local maxima in $D$ only in vertices. The
vertex $v$ is a (local) maximizer of $g$ if $g_e <\gamma_v$ for each edge $e$ that
includes $v$. The leaves of the thermodynamic tree are pairs $(\gamma_v,\{v\})$ for the
vertices that are the local maximizers of $G$.

As a preliminary step of the construction, we arrange and enumerate the labels of the
elements of $\widetilde{D_1}$,  the vertices and edges, in descending order. Let there
exist $l$ different label values: $g_{\max}=a_1
> a_2 >\ldots > a_l$. Each $a_k$ is a value $\gamma_{v}=G(v)$ at a vertex $v \in D_0$
or the minimum of $G$ on an edge $e \subset D_1$ (or both). Let  $A_i$ be the
set of vertices $v\in D_0$ with $\gamma_v=a_i$ and let $E_i$ be the set of
edges of $D_1$ with $g_e=a_i$ ($i=1, \ldots , l$).

Let us construct the connected components of the graph
$\widetilde{D_1}\setminus U_g$ starting from $a_1=g_{\max}$. The function $G$
is strictly convex, hence, $a_1 =\gamma_v$ for a set of vertices $A_1 \subset
D_0$ but it is impossible that $a_1 =g_e$ for an edge $e$, hence,
$E_1=\emptyset$.

The set of connected components of $\widetilde{D_1}\setminus U_g$ is the same
for all $g\in (a_{i+1},a_i]$. For an interval $(a_2, a_1]$ the connected
components of $\widetilde{D_1}\setminus U_g$ are the one-element sets $\{v\}$
for $v \in A_1$.

For $g\in [g^*,a_l]$ the graph $\widetilde{D_1}\setminus U_g$ includes all the
vertices and edges of $\widetilde{D_1}$ and, hence, it is connected for this
segment. Let us take, formally, $a_{l+1}=g^*$.

Let $\mathcal{L}_{i}=\{M^i_1,\ldots, M_{k_i}^i\}$ be the set of the connected
components of $\widetilde{D_1}\setminus U_g$ for $g\in (a_i,a_{i-1}]$ ($i=1,
\ldots , l$). Each connected component is represented by the set of its
vertices $M^i_j$. Let us describe the recursive procedure for construction of
$\mathcal{L}_{i}$:
\begin{enumerate}
\item{Let us take formally $\mathcal{L}_{0}=\emptyset$.}
\item{Assume that $\mathcal{L}_{i-1}$ is given and $i \leq l$. Let us find the set $\mathcal{L}_{i}$
of connected components of $\widetilde{D_1}\setminus U_g$ for $g=a_i$ (and,
therefore, for $g\in(a_{i+1},a_i]$).
\begin{itemize}
\item{Add the one-element sets $\{v\}$ for all $v\in A_i$ to the set
    $$\mathcal{L}_{i-1}=\{M^{i-1}_1,\ldots, M_{k_{i-1}}^{i-1}\}\, .$$ Denote this
    auxiliary set of sets as $\widetilde{L}_{i,0}=\{M_1,\ldots, M_q\}$, where
    $q=k_{i-1}+|A_i|$.}
 \item{Enumerate the edges from $E_i$ in an arbitrary order: $e_1, \ldots,
     e_{|E_i|}$. For each $k=0, \ldots, |E_i|$, create recursively an auxiliary
set of sets $\widetilde{L}_{i,k}$ by the union of some of elements of
$\widetilde{L}_{k-1}$: Let $\widetilde{L}_{i,k-1}$ be given and $e_k$ connects
the vertices $v$ and $v'$. If $v$ and $v'$ belong to the same element of
$\widetilde{L}_{i,k-1}$ then $\widetilde{L}_{i,k}=\widetilde{L}_{i,k-1}$. If $v$
and $v'$ belong to the different elements of $\widetilde{L}_{i,k-1}$, $M$ and
$M'$, then $\widetilde{L}_{i,k}$ is produced from $\widetilde{L}_{i,k-1}$ by the
union of $M$ and $M'$:
$$\widetilde{L}_{i,k}=(\widetilde{L}_{i,k-1}\setminus \{M\}\setminus
\{M'\}) \cup \{M\cup M'\}$$ (we delete two elements, $M$ and $M'$, from
 $\widetilde{L}_{i,k-1}$  and add a new
element $M\cup M'$).}
\end{itemize}The set
$\mathcal{L}_{i}$ of connected components of $\widetilde{D_1}\setminus U_g$ for
$g=a_i$ is $\mathcal{L}_{i}=\widetilde{L}_{i,|E_i|}$.}
\end{enumerate}
Generically, all the labels of the graph $\widetilde{D_1}$ vertices and edges
are different and the sets $E_i$ and $A_i$ include not more than one element.
Moreover, for each $i$ either $E_i$ or $A_i$ is generically empty and the
description of the recursive procedure may be simplified for the generic case:
\begin{enumerate}
\item{Let us take formally $\mathcal{L}_{0}=\emptyset$.}
\item{Assume that $\mathcal{L}_{i-1}$ is given and $i \leq l$.
\begin{itemize}
\item{If $a_i$ is a label of a vertex $v$, $a_i=\gamma_v$, then add
the one-element set $\{v\}$ to the set $\mathcal{L}_{i-1}$:
$\mathcal{L}_{i}=\mathcal{L}_{i-1}\cup \{\{v\}\}$.}
 \item{Let $a_i$ be a label of an edge $e=[v,v']$.
If $v$ and $v'$ belong to the same element of $\mathcal{L}_{i-1}$ then
$\mathcal{L}_{i}=\mathcal{L}_{i-1}$.
 If $v$ and $v'$ belong to the different elements of $\mathcal{L}_{i-1}$, $M$ and
$M'$, then $\mathcal{L}_{i}$ is produced from $\mathcal{L}_{i-1}$ by the union of
$M$ and $M'$ (delete elements $M$ and $M'$ and add an element $M\cup M'$).}
\end{itemize}}
\end{enumerate}

The described procedure gives us the sets of connected components of
$\widetilde{D_1}\setminus U_g$ for all $g$ and, therefore, we get the tree $\mathcal{T}$.
The descent from the higher values of $G$ allows us to avoid the solution of the
computationally more expensive problem of the calculation of the connected components of
a graph at any level of $G$.

\subsection{The problem of attainable sets\label{Sec:AlgAttain}}

In this section, we demonstrate how to solve the problem of attainable sets. For given
$x\in D$ (an initial state) we describe the {\em attainable set} $$Att(x)=\{y\in D \, |
\, x \succsim y \}$$ by a system of inequalities. Let the tree $\mathcal{T}$ of $G$ in
$D$ be given and let all the pairs $(g,M)\in \mathcal{T}$ be described. We also use the
notation $Att(z)$ for sets attainable in $\mathcal{T}$ from $z\in \mathcal{T}$.

First of all, let us describe the {\em preimage of a point $(g,M)\in \mathcal{T}$ in
$D$}. It can be described by the equation $G(x)=g$ and a set of linear inequalities. For
each edge $e$ we select a minimizer of $G$ on $e$, $x_e={\rm argmin}\{G(x)\, | \, x \in
e\}$ (we use the same notations for the elements of the graph $\widetilde{D_1}$ and of
the continuum $D_1$). Let
$$Q_M=\{ x_e \, | \, e\in E_{\rm out}(M)\}\, .$$
In particular, $ \underline{a}_M=\max \{G(x) \, | \, x\in Q_M \}$.

The following Proposition is a direct consequence of Proposition~\ref{Prop:ShortCharact}.

\begin{proposition}The preimage of $(g,M)$ in $D$ is a set
\begin{equation}\label{Eq:Preimage}
\pi^{-1}(g,M)=\{x\in {\rm conv} (Q_M\cup M) \, | \, G(x)=g \}\, .
\end{equation}
\end{proposition}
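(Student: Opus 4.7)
The plan is to deduce the statement directly from Proposition~\ref{Prop:ShortCharact} applied to the open convex set $U:=U_g$. First I would unpack the meaning of $\pi^{-1}(g,M)$ using the coordinate description in Theorem~\ref{Prop:Coordinate}: the point $(g,M)\in \mathcal{Y}\cong \mathcal{T}$ corresponds to the equivalence class of any $x\in D$ with $G(x)=g$ that lies in the path-connected component $W_M$ of $D\setminus U_g$ whose set of vertices is exactly $M$. Combining this with Lemma~\ref{Lemma5.7}, which identifies the path-connected components of $S_g$ with the borders $W_i\cap S_g$ of the components $W_i$ of $D\setminus U_g$, I obtain
$$\pi^{-1}(g,M)=W_M\cap S_g.$$

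Next I would invoke Proposition~\ref{Prop:ShortCharact} with $U:=U_g$ to describe $W_M$. That proposition permits \emph{any} choice $x_e\in U\cap e$ for the points forming $Q_{0i}$; here the statement singles out the specific choice $x_e={\rm argmin}\{G(x)\,|\, x\in e\}$. The verification to perform is that this minimizer does indeed lie in $U_g\cap e$ for every outer edge $e\in E_{\rm out}(M)$. Since $M$ is a connected component of $\widetilde{D_1}\setminus U_g$ and $e=[v,v']$ connects a vertex $v\in M$ to a vertex $v'\in D_0\setminus M$, the edge $e$ must have been removed in passing from $\widetilde{D_1}$ to $\widetilde{D_1}\setminus U_g$. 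This forces either $\gamma_{v'}<g$, whence $\min_e G\le \gamma_{v'}<g$, or $\gamma_{v'}\ge g$ together with the edge label $g_e<g$ (otherwise $v'$ would belong to $M$). In either case $g_e<g$, so the minimizer belongs to $U_g$, and Proposition~\ref{Prop:ShortCharact} delivers $W_M={\rm conv}(Q_M\cup M)\setminus U_g$.

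Finally, because $U_g=\{x\in D\,|\, G(x)<g\}$ is disjoint from $S_g=\{x\in D\,|\, G(x)=g\}$, intersecting with $S_g$ reduces the set difference to a plain intersection:
$$\pi^{-1}(g,M)=W_M\cap S_g=\{x\in {\rm conv}(Q_M\cup M) \,|\, G(x)=g\},$$
which is formula (\ref{Eq:Preimage}). I expect the only subtle step to be the verification that the minimizer of $G$ on an outer edge lands in $U_g$; the rest is a direct translation between the $\mathcal{Y}$ and $\mathcal{T}$ descriptions combined with the boundary correspondence from Lemma~\ref{Lemma5.7}.
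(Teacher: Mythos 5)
Your proposal is correct and follows the paper's route: the paper derives the formula as a direct consequence of Proposition~\ref{Prop:ShortCharact} (with $U=U_g$), and you have simply made the implicit details explicit — the identification $\pi^{-1}(g,M)=W_M\cap S_g$ via Lemma~\ref{classConnected} and Lemma~\ref{Lemma5.7}, the check that the minimizer $x_e$ of $G$ on each outer edge satisfies $g_e<g$ and hence is an admissible choice of cutting point, and the disjointness of $S_g$ and $U_g$ turning the set difference into an intersection. Nothing is missing; your verification that $g_e<g$ on outer edges is exactly the point the paper leaves tacit.
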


The sets $M$ and $Q_M$ in (\ref{Eq:Preimage}) do not depend on the specific
value of $g$. It is sufficient that the point $(g,M)\in \mathcal{T}$ exists.

Let us consider the second projection of $\mathcal{T}$, i.e., the set of all connected
components of the graph $\widetilde{D_1}\setminus U_g$ for all $g$. For a connected
component $M$, the {\em lower chain} of connected components is a sequence $M=M_1
\subsetneqq M_2 \subsetneqq \ldots \subsetneqq M_k$. (``Lower" here means the descent in
the natural order in $\mathcal{T}$, $\succsim$.) For a given initial element $M=M_1$ the
{\em maximal lower chain } of $M$ is the lower chain of $M$ that cannot be extended by
adding new elements. By construction of connected components, the maximal lower chain of
$M$ is unique for each initial element $M$. In the maximal lower chain
$\underline{a}_{M_i}=\overline{a}_{M_{i+1}}$.

For each set of values $H \subset (\underline{a}_M,\overline{a}_M]$ the
preimage of the set $H\times M \subset \mathcal{T}$ is given by
(\ref{Eq:Preimage}) as
\begin{equation}\label{Eq:PreimageSet}
\pi^{-1}(H\times M)=\{x\in {\rm conv} (Q_M\cup M) \, | \, G(x)\in H \}\, .
\end{equation}

We describe the set $Att(x)$ for $x \in D$ by the following
procedures: (i) find the projection $\pi(x)$ of $x$ onto
$\mathcal{T}$, (ii) find the attainable set in $\mathcal{T}$
from $\pi(x)$, $Att(\pi(x))$, and (iii) find the preimage of
this set in $D$:
\begin{equation}\label{PreimageAtt}
Att(x)= \pi^{-1}(Att(\pi(x)))\, .
\end{equation}

The attainable set $Att(g,M)$ in $\mathcal{T}$ from $(g,M) \in
\mathcal{T}$ is constructed as a union of edges and its parts.
Let $M=M_1 \subsetneqq M_2 \subsetneqq \ldots \subsetneqq
M_k=D_0$ be the maximal lower chain of $M$. Then
\begin{equation}\label{Att()}
\begin{split}
Att(g,M)=&(\underline{a}_1,g] \times M_1 \cup (\underline{a}_2,\underline{a}_1]
\times M_2 \cup \ldots \\ &\cup (\underline{a}_{k-1},\underline{a}_{k-2}]
\times M_{k-1} \cup [\underline{a}_{k},\underline{a}_{k-1}] \times M_k\, ,
\end{split}
\end{equation}
 where $\underline{a}_i = \underline{a}_{M_i}$.

To find the preimage of $Att(g,M)$ in $D$ we have to apply
formula (\ref{Eq:PreimageSet}) to each term of (\ref{Att()}).
In Sec.~\ref{Sec:AttainabilityProblem} we demonstrated how to
find $\pi(x)$. Therefore, each step of the solution of the
problem of attainable set (\ref{PreimageAtt}) is presented.

\section{Chemical thermodynamics: examples\label{sec:ChemKin}}
\begin{figure}
\centering
\includegraphics[width=0.8\textwidth]{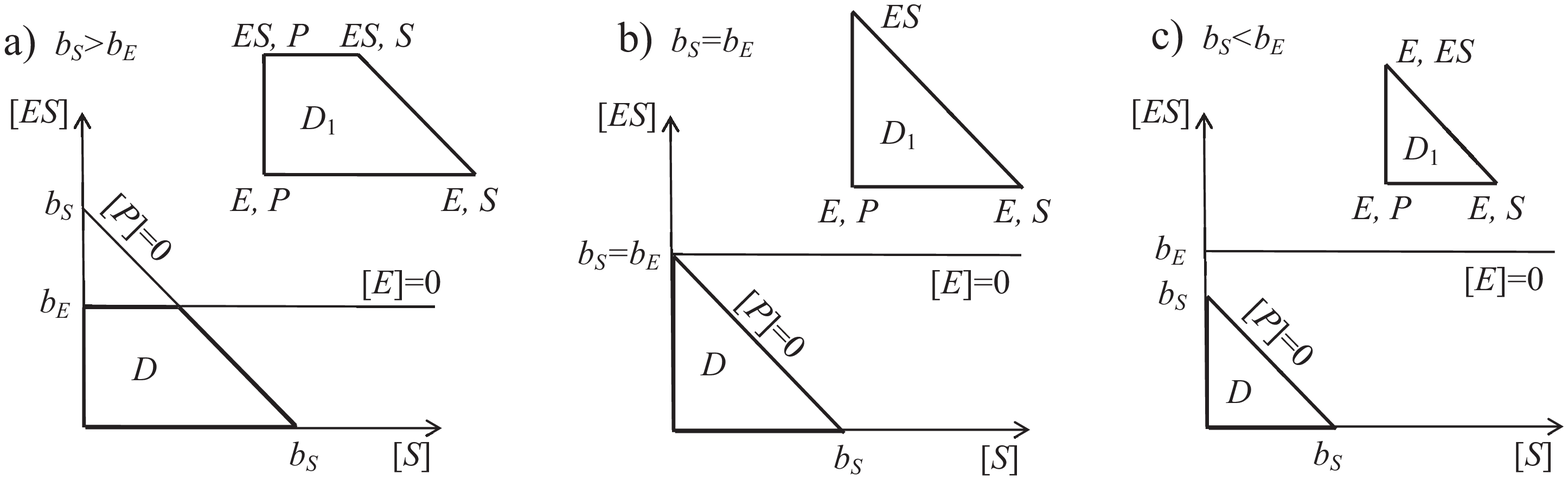}
\caption{\label{Trapecia} The balance polygon $D$ on the plane with coordinates
$[S]$ and $[ES]$ for the four--component enzyme--substrate system $S$, $E$,
$ES$ $P$ with two balance conditions, $b_S=[S]+[ES]+[P]={\rm const}$ and
$b_E=[E]+[ES]={\rm const}$.}
\end{figure}
\begin{figure}
\centering
\includegraphics[width=0.8\textwidth]{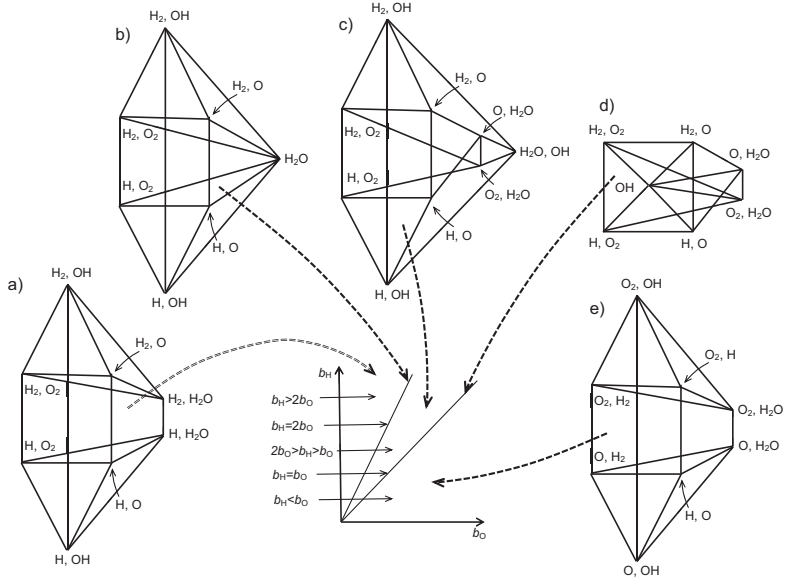}
\caption{\label{Paramet6} The graph $\widetilde{D_1}(b)$ of the one-skeleton of
the balance polyhedron for the six-component system, ${\rm H}_2$, ${\rm O}_2$,
${\rm H}$, ${\rm O}$, ${\rm H}_2{\rm O}$, ${\rm OH}$, as a piece-wise constant
function of $b=(b_{\rm H}, b_{\rm O})$. For each vertex the components are
indicated which have non-zero concentrations at this vertex.}
\end{figure}

\subsection{Skeletons of the balance polyhedra}
In chemical thermodynamics and kinetics, the variable $N_i$ is the amount of
the $i$th component in the system. The balance polyhedron $D$ is described by
the positivity conditions $N_i \geq 0$ and the balance conditions
(\ref{balance}) $b_i(N)={\rm const}$ ($i=1, \ldots , m$). Under the isochoric
(the constant volume) conditions, the concentrations $c_i$ also satisfy the
balance conditions and we can construct the balance polyhedron for
concentrations. Sometimes, the balance polyhedron is called the {\it reaction
simplex} with some abuse of language because it is not obligatory a simplex
when the number $m$ of the independent balance conditions is greater than one.

The graph $\widetilde{D_1}$ depends on the values of the balance functionals
$b_i=b_i(N)=\sum_{j=1}^n a_i^j N_j$. For the positive vectors $N$, the vectors
$b$ with coordinates $b_i=b_i(N)$ form a convex polyhedral cone in
$\mathbb{R}^m$. Let us denote this cone by $\Lambda$. $\widetilde{D_1}(b)$ is a
piece-wise constant function on $\Lambda$. Sets with various constant values of
this function are cones. They form a partition of $\Lambda$. Analysis of this
partition and the corresponding values of $\widetilde{D_1}$ can be done by the
tools of linear programming \cite{obh}. Let us represent several examples.

In the first example, the reaction system consists of four
components: the substrate $S$, the enzyme $E$, the enzyme-substrate
complex $ES$ and the product $P$. we consider the system under
constant volume. We denote the concentrations by $[S]$, $[E]$,
$[ES]$ and $[P]$. There are two balance conditions:
$b_S=[S]+[ES]+[P]={\rm const}$ and $b_E=[E]+[ES]={\rm const}$.

\begin{figure}
\centering
\includegraphics[width=0.8\textwidth]{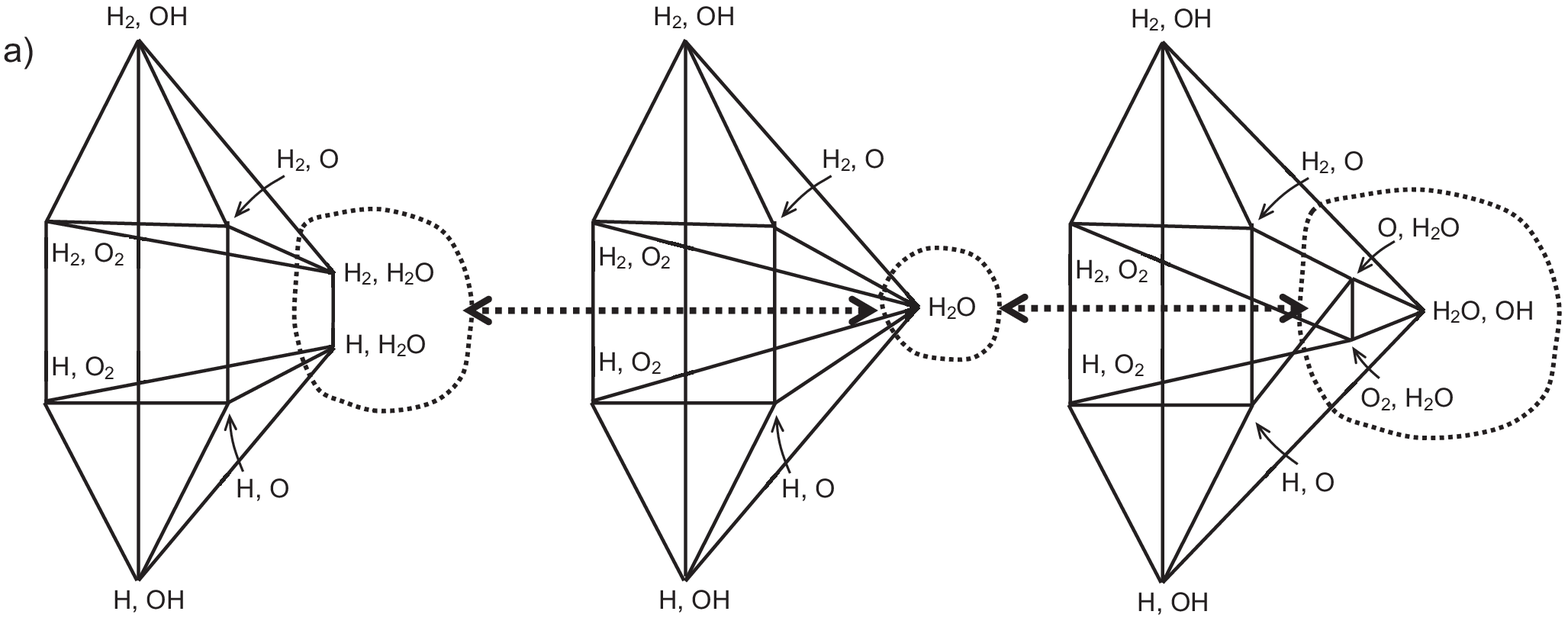}
\includegraphics[width=0.8\textwidth]{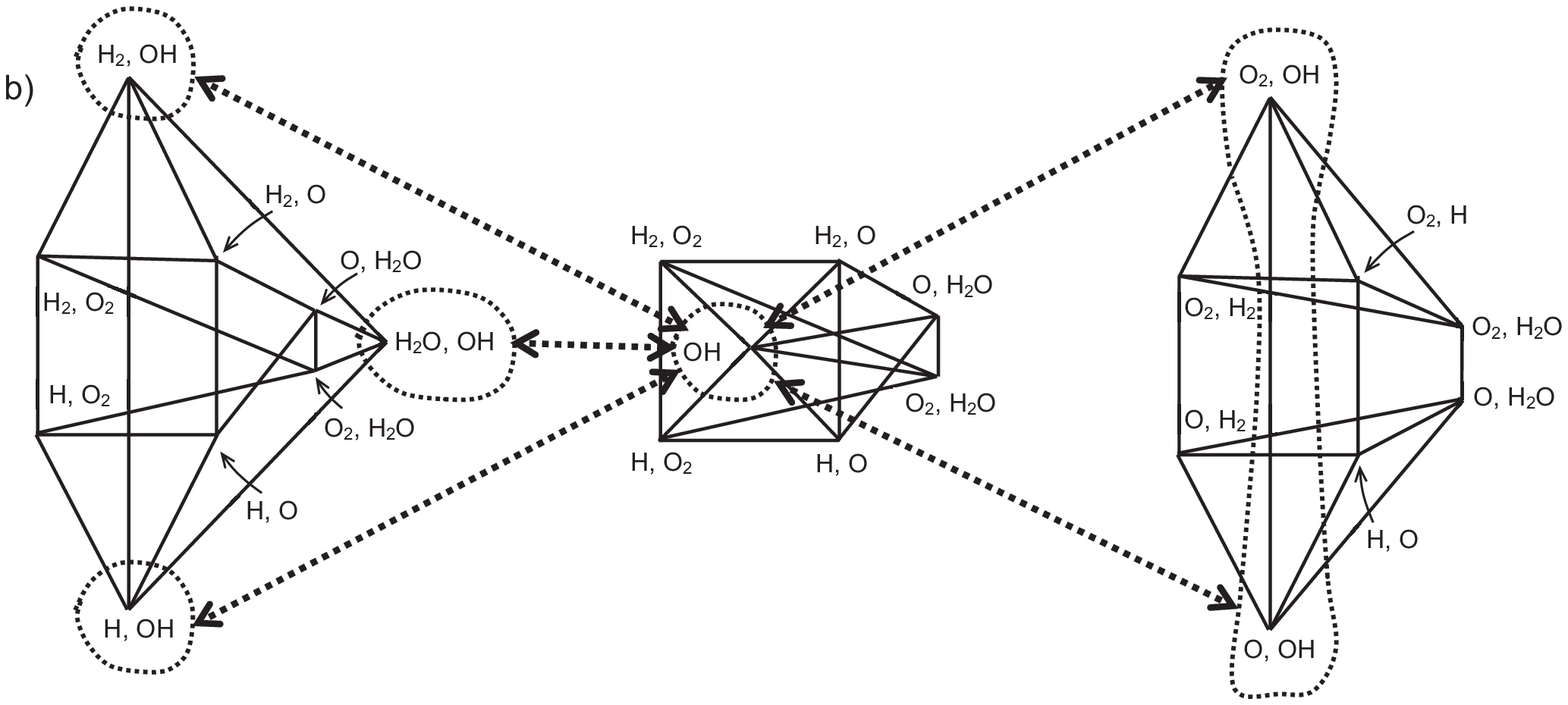}
\caption{\label{Bifur6}Transformations of the graph
$\widetilde{D_1}(b)$ with changes of the relation between $b_{\rm
H}$ and $b_{\rm O}$: (a) transition from the regular case $b_{\rm
H}> 2b_{\rm O}$ to the regular case $2b_{\rm O}>b_{\rm H}>b_{\rm O}$
through the singular case $b_{\rm H}= 2b_{\rm O}$, (b) transition
from the regular case $2b_{\rm O}>b_{\rm H}>b_{\rm O}$ to the
regular case  $b_{\rm H}<b_{\rm O}$ through the singular case
$b_{\rm H}=b_{\rm O}$.}
\end{figure}

\begin{figure}[t]
\centering
\includegraphics[width=0.4\textwidth]{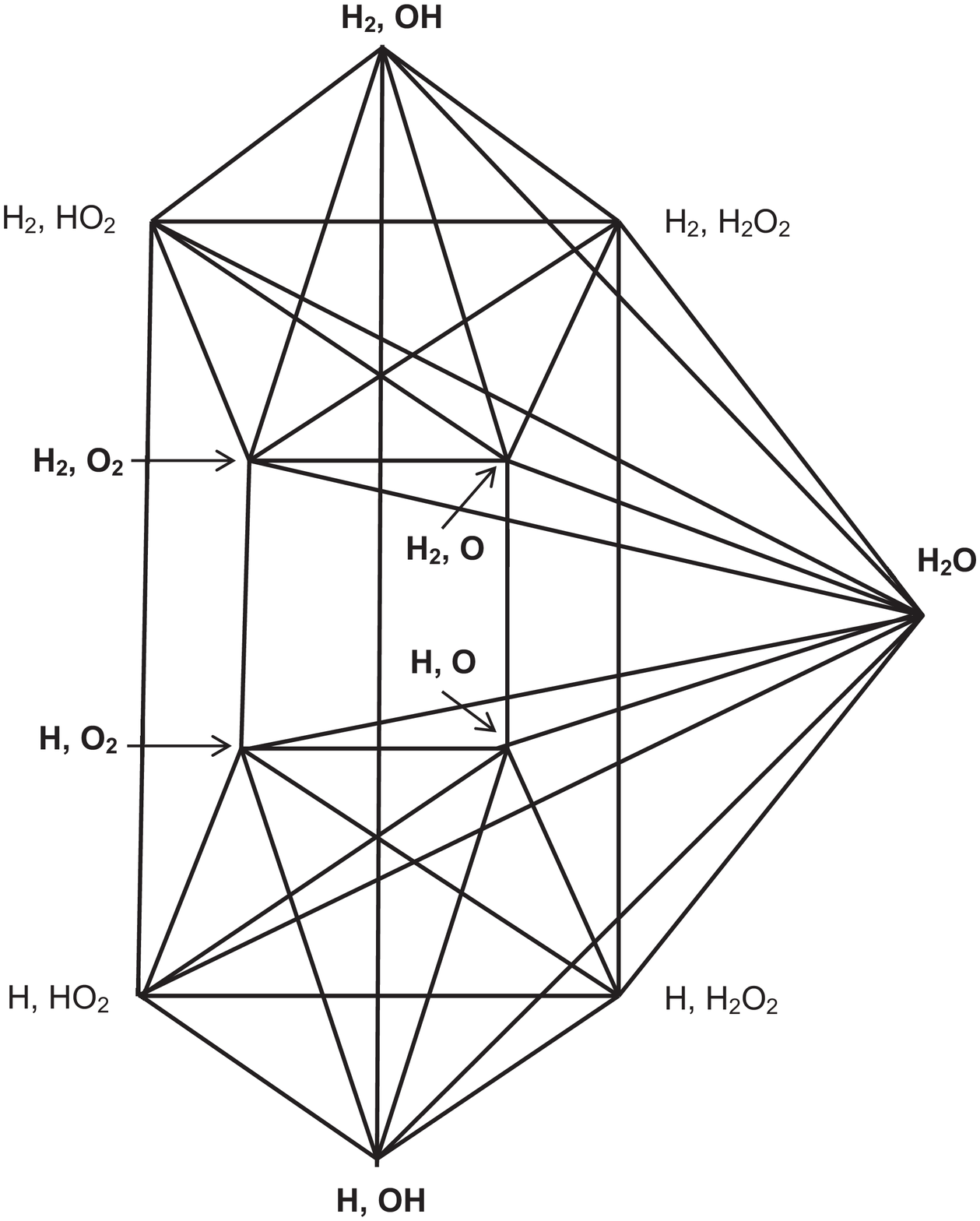}
\caption{\label{Poly8}The graph $\widetilde{D_1}$ for the
eight-component system, ${\rm H}_2$, ${\rm O}_2$, ${\rm H}$, ${\rm
O}$, ${\rm H}_2{\rm O}$, ${\rm OH}$, ${\rm H}_2{\rm O}_2$, ${\rm
H}{\rm O}_2$ for the stoichiometric mixture, $b_{\rm H}=2b_{\rm O}$.
The vertices that correspond also to the six-component mixture are
distinguished by bold font.}
\end{figure}

For  $b_S>b_E$ the polyhedron (here the polygon) $D$ is a trapezium
(Fig.~\ref{Trapecia}a). Each vertex corresponds to two components that have
non-zero concentrations in this vertex. For $b_S>b_E$ there are four such
pairs, $(ES,P)$, $(ES,S)$, $(E,P)$ and $(E,S)$. For two pairs there are no
vertices: for $(S,P)$ the value $b_E$ is zero and for $(ES,E)$ it should be
$b_S<b_E$. When $b_S=b_E$, two vertices, $(ES,P)$ and $(ES,S)$, transform into
one vertex with one non-zero component, $ES$, an the polygon $D$ becomes a
triangle (Fig.~\ref{Trapecia}b). When $b_S<b_E$ then $D$ is also a triangle and
a vertex $ES$ transforms in this case into $(ES,E)$ (Fig.~\ref{Trapecia}c).

For the second example, we select a system with six components and two balance
conditions: ${\rm H}_2$, ${\rm O}_2$, ${\rm H}$, ${\rm O}$, ${\rm H}_2{\rm O}$,
${\rm OH}$;
\begin{equation*}
\begin{split}
&b_{\rm H}=2N_{\rm H_2}+N_{\rm H}+2N_{\rm H_2O}+N_{\rm OH}\, , \\
&b_{\rm O}=2N_{\rm O_2}+N_{\rm O}+N_{\rm H_2O}+N_{\rm OH}\, .
\end{split}
\end{equation*}

The cone $\Lambda$ is a positive quadrant on the plane with the coordinates
$b_{\rm H}, b_{\rm O}$. The graph $\widetilde{D_1}(b)$ is constant in the
following cones in $\Lambda$ ($b_{\rm H}, b_{\rm O}>0$): (a) $b_{\rm H}>
2b_{\rm O}$, (b) $b_{\rm H}=2b_{\rm O}$, (c) $2b_{\rm O}>b_{\rm H}>b_{\rm O}$,
(d) $b_{\rm H}=b_{\rm O}$ and (e) $b_{\rm H}<b_{\rm O}$ (Fig.~\ref{Paramet6}).

The cases (a) $b_{\rm H}> 2b_{\rm O}$, (c) $2b_{\rm O}>b_{\rm H}>b_{\rm O}$,
and (e) $b_{\rm H}<b_{\rm O}$ (Fig.~\ref{Paramet6}) are regular: there are two
independent balance conditions and for each vertex there are exactly two
components with non-zero concentration. In case (a) ($b_{\rm H}> 2b_{\rm O}$),
if  $b_{\rm H}\to  2b_{\rm O}$ then two regular vertices, ${\rm H}_2, \,{\rm
H}_2{\rm O}$ and ${\rm H}, \,{\rm H}_2{\rm O}$, join in one vertex (case (b))
with only one non-zero concentration, ${\rm H}_2{\rm O}$ (Fig.~\ref{Bifur6}a).
This vertex explodes in three vertices ${\rm O}, \,{\rm H}_2{\rm O}$; ${\rm
O}_2, \,{\rm H}_2{\rm O}$ and ${\rm H}_2{\rm O}, \, {\rm OH}$, when $b_{\rm H}$
becomes smaller than $2b_{\rm O}$ (case (c), $2b_{\rm O}>b_{\rm H}>b_{\rm O}$)
(Fig.~\ref{Bifur6}a). Analogously, in the transition from the regular case (c)
to the regular case (e) through the singular case (d) ($b_{\rm H}=b_{\rm O}$)
three vertices join in one, ${\rm 0H}$ that explodes in two
(Fig.~\ref{Bifur6}b).

For the modeling of hydrogen combustion, the eight-component model
is used usually: ${\rm H}_2$, ${\rm O}_2$, ${\rm H}$, ${\rm O}$,
${\rm H}_2{\rm O}$, ${\rm OH}$, ${\rm H}_2{\rm O}_2$, ${\rm H}{\rm
O}_2$. In Fig.~\ref{Poly8} the graph $\widetilde{D_1}$ is presented
for one particular relations between $b_{\rm H}$ and $2b_{\rm O}$,
$b_{\rm H}=2b_{\rm O}$. This is the so-called ``stoichiometric
mixture" where proportion between $b_{\rm H}$ and $2b_{\rm O}$ is
the same as in the ``product", ${\rm H}_2{\rm O}$.

\subsection{Examples of the thermodynamic tree}

\begin{figure}[t]
\centering
\includegraphics[width=0.62\textwidth]{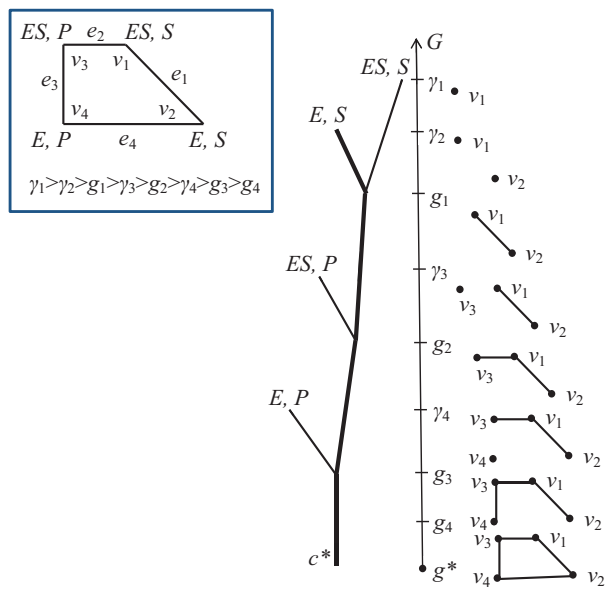}
\caption{\label{TreeTrap}The thermodynamic tree for the four--component
enzyme--substrate system $S$, $E$, $ES$ $P$ (Fig.~\ref{Trapecia}) with excess
of substrate: $b_S>b_E$ (case (a)). The vertices and edges are enumerated in
order of $\gamma_v$ and $g_e$ (starting from the greatest values). The order of
these numbers is represented in Fig. On the right, the graphs
$\widetilde{D_1}\setminus U_g$ are depicted. The solid bold line on the tree is
the thermodynamically admissible path from the initial state  $E,S$ (enzyme
plus substrate) to the equilibrium. There are leaves at all levels
$g=\gamma_i$. There are branching points at $g=g_{1,2,3}$ and no vertices at
$g=g_4$.}
\end{figure}

In this section, we present two example of the thermodynamic tree. First, let
us consider the trapezium (Fig.~\ref{Trapecia}a). Let us select the order of
numbers $\gamma_v$ and $g_e$ according to Fig.~\ref{TreeTrap}. The vertices and
edges are enumerated in order of $\gamma_v$ and $g_e$ (starting from the
greatest values). The tree is presented in Fig.~\ref{TreeTrap}. On the right,
the graphs $\widetilde{D_1}\setminus U_g$ are depicted for all intervals
$(a_{i-1},a_i]$. For $(\gamma_2,\gamma_1]$ it is just a vertex $v_1$. For
$(g_1,\gamma_2]$ it consists of two disjoint vertices, $v_1$ and $v_2$. For
$(\gamma_3,g_1]$ these two vertices are connected by an edge. On the interval
$(g_2, \gamma_3]$ the graph $\widetilde{D_1}\setminus U_g$ is an edge
$(v_1,v_2)$ and an isolated vertex $v_3$. On $(\gamma_4, g_2]$ all three
vertices $v_1$, $v_2$ and $v_3$ are connected by edges. For $(g_3,\gamma_4]$
the isolated vertex $v_4$ is added to the graph $\widetilde{D_1}\setminus U_g$.
For $g \leq g_3$ the graph $\widetilde{D_1}\setminus U_g$ includes all the
vertices and is connected.

\begin{figure}[t]
\centering
\includegraphics[width=0.58\textwidth]{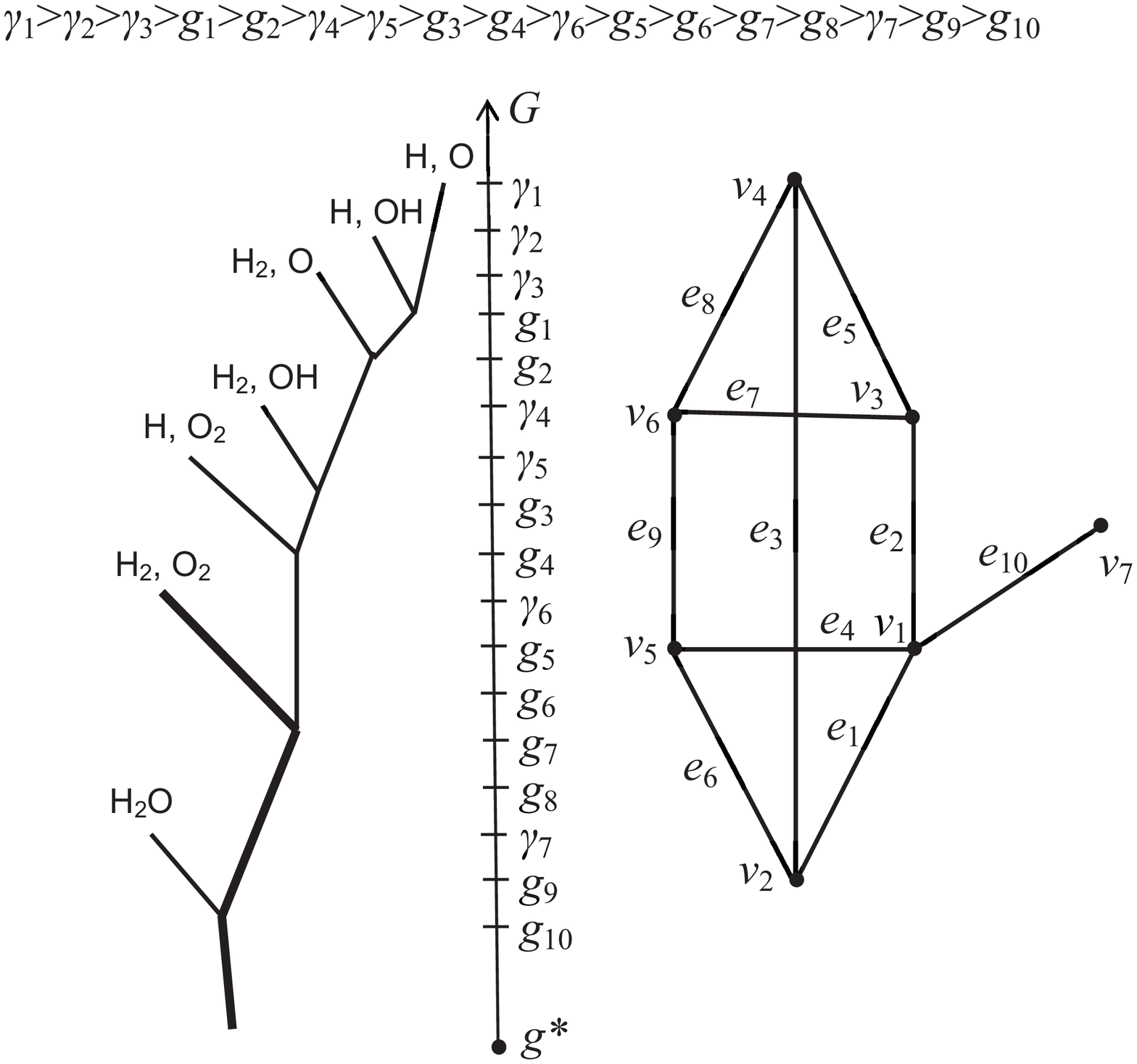}
\caption{\label{Tree6stoi} The thermodynamic tree for the six--component ${\rm
H}_2$--${\rm O}_2$ system, ${\rm H}_2$, ${\rm O}_2$, ${\rm H}$, ${\rm O}$,
${\rm H}_2{\rm O}$, ${\rm OH}$ with the stoichiometric hydrogen--oxygen ratio,
$b_{\rm H}=2 b_{\rm O}$ (Fig.~\ref{Paramet6}b). The order of numbers
$\gamma_i$, $g_j$ is presented in Fig. On the right, the graph
$\widetilde{D_1}\setminus U_g$ is represented for $g=g_{10}$. For $g\leq
g_{10}$, the graph $\widetilde{D_1}\setminus U_g$ includes all the vertices and
is connected. The solid bold line on the tree is the thermodynamically
admissible path from the initial state  ${\rm H}_2, {\rm O}_2$ to the
equilibrium. There are leaves at all levels $g=\gamma_i$. There are branching
points at $g=g_{1-4,7,10}$ and no vertices at $g=g_{5,6,8,9}$.}
\end{figure}

For the second example (Fig.~\ref{Tree6stoi}) we selected the six-component
system (Fig.~\ref{Paramet6}) with the stoichiometric hydrogen--oxygen ratio,
$b_{\rm H}=2 b_{\rm O}$. The selected order of numbers $\gamma_i$, $g_j$ is
presented in Fig.~\ref{Tree6stoi}.

\section{Conclusion}

We studied dynamical systems that obey a continuous strictly convex Lyapunov
function $G$ in a positively invariant convex polyhedron $D$. Convexity allows
us to transform $n$-dimensional problems about attainability and attainable
sets into an analysis of 1D continua and discrete objects.

We construct the tree (the Adelson-Velskii -- Kronrod -- Reeb tree
\cite{AdelKronrod1945,Kronrod1950,Reeb1946}) of the function $G$ in $D$ and
call this 1D continuum the {\em thermodynamic tree}.

The thermodynamic tree is a tool to solve the ``attainability problem": is
there a continuous path between two states, $x$ and $y$ along which the
conservation laws hold, the concentrations remain non-negative and the relevant
thermodynamic potential $G$ (Gibbs energy, for example) monotonically
decreases? This question arises often in non-equilibrium thermodynamics and
kinetics. The analysis of the admissible paths can be considered as a dynamical
analogue of the study of the steady states feasibility in chemical and
biochemical kinetics. In this recent study, the energy balance method, the
stoichiometric network theory, the entropy production analysis and the advanced
algorithms of convex geometry of polyhedral cones are used
\cite{BerdQuian2004,Qian2005}.

The obvious inequality, $G(x)\geq G(y)$ is necessary but not sufficient
condition for existence of an admissible path from $x$ to $y$. In 1D systems,
the space of states is an interval and the thermodynamic tree has two leaves
(the ends of the interval) and one root (the equilibrium). In such a system, a
spontaneous transition from a state $x$ to a state $y$ is allowed by
thermodynamics if $G(x)\geq G(y)$ and $x$ and $y$ are on the same side of the
equilibrium, i.e. they belong to the same branch of the thermodynamic tree.
This is just a well known rule: ``it is impossible to overstep the equilibrium
in one-dimensional systems".

The construction of the thermodynamic tree gives us the multidimensional
analogue of this rule. Let $\pi: D \to \mathcal{T}$ be the natural projection
of the balance polyhedron $D$ on the thermodynamic tree $\mathcal{T}$. A
spontaneous transition from a state $x$ to a state $y$ is allowed by
thermodynamics if and only if $\pi(y) \in [\pi(x),\pi(N^*)]$, where $N^*$ is
the equilibrium and $[\pi(x),\pi(N^*)]$  is the ordered segment.

In this paper, we developed methods for solving the following
problems:
\begin{enumerate}
\item{How to construct the thermodynamic tree $\mathcal{T}$?}
\item{How to solve the attainability problem?}
\item{How to describe the set of all states
attainable from a given initial state $x$?}
\end{enumerate}

For this purpose, we analyzed the cutting of a convex polyhedron by a convex
set and developed the algorithm for construction of the tree of level set
components of a convex function in a convex polyhedron. In this algorithm, the
restriction of $G$ onto the 1-skeleton of $D$ is used. This finite family of
convex functions of one variable includes all necessary information for
analysis of the tree of the level set component of the convex function $G$ of
many variables.

In high dimensions, some steps of our analysis become computationally
expensive. The most expensive operations are the convex hull (description of
the convex hull of a finite set by linear inequalities) and the double
description operations (description of the faces of a polyhedron given by a set
of linear inequalities). Therefore, in high dimensions some of the problem may
be modified, for example, instead of the explicit description of the convex
hull it is possible to use the algorithm for solution of a problem: does a
point belong to this convex hull \cite{PreparataShamos1985}. The computational
aspects of the discussed problems in higher dimensions deserve more attention
and the proper modifications of the problems should be elaborated. For example,
two following problems  need to be solved efficiently:
\begin{itemize}
\item{To find the maximal and the minimal value of any
    linear function $f$ in a class of thermodynamic
    equivalence;}
\item{To evaluate the maximum and the minimum of $\D G/\D
    t$ in any class of thermodynamic equivalence:
    $-\overline{\sigma}\leq \D G/\D t \leq
    -\underline{\sigma}\leq 0.$}
\end{itemize}

For any $w\in \mathcal{T}$, the solution of the first problem allows
us to find an interval of values of any linear function of state in
the corresponding class of thermodynamic equivalence. We can use the
results of Sec.~\ref{Sec:DescrIneq} to reformulate this problem as
the convex programming problem.

The second problem gives us the possibility to consider
dynamics of relaxation on $\mathcal{T}$. On each interval on
$\mathcal{T}$ we can write
\begin{equation}\label{TreeDyn}
-\overline{\sigma}(g)\leq\D g/\D t\leq -\underline{\sigma}(g)\leq 0\, ,
\end{equation}
where the functions $\overline{\sigma}(g),\underline{\sigma}(g)
\geq 0$ depend on the interval on $\mathcal{T}$.

This differential inequality (\ref{TreeDyn}) will be a tool for the study of
the dynamics of relaxation and may be considered as a reduced kinetic model
that substitutes dynamics on the $d$-dimensional balance polyhedron $D$ by
dynamics on the one-dimensional dendrite. The problem of the construction of
the reduced model (\ref{TreeDyn}) is closely related to the following problem
\cite{Villani}: ``Can one establish a lower bound on the entropy production, in
terms of how much the distribution function departs from thermodynamical
equilibrium?" In 1982, C. Cercignani \cite{Cercignani1982} proposed a simple
linear estimate for $\underline{\sigma}(g)$ for the Boltzmann equation
(Cercignani's conjecture). After that, these estimates were studied and
improved by many authors \cite{Desvillettes,Carlen,Toscani,Villani} and now the
state of art achieved for the Boltzmann equation gives us some hints how to
create the relaxation model (\ref{TreeDyn}) on the thermodynamic tree for the
general kinetic systems. This may be the next step in the study of the
thermodynamic trees.

\end{document}